\newcommand{\name}{BPFL~}
\newtheorem{definition}{Definition}
\begin{document}

\title{Efficient Byzantine-Robust and Provably Privacy-Preserving Federated Learning} 

\author{Chenfei Nie, 
        Qiang Li, 
        Yuxin Yang, 
        Yuede Ji, 
        and Binghui Wang,~\IEEEmembership{Member,~IEEE}
\thanks{
Chenfei Nie and Qiang Li are with the College of Computer Science and Technology, Jilin University, China. }
\thanks{
Yuxin Yang is with the College of Computer Science and Technology, Jilin University, China, and Department of Computer Science, Illinois Institute of Technology, USA.} 
\thanks{Yuede Ji is with the Department of Computer Science and Engineering at the University of Texas at Arlington, USA.}
\thanks{Binghui Wang is with the Department of Computer Science, Illinois Institute of Technology, USA. Wang is the corresponding author (bwang70@iit.edu).}
}

\markboth{IEEE TRANSACTIONS ON DEPENDABLE AND SECURE COMPUTING}
{Shell \MakeLowercase{\textit{et al.}}: A Sample Article Using IEEEtran.cls for IEEE Journals}

\maketitle

\begin{abstract}
Federated learning (FL) is an emerging distributed learning paradigm without sharing participating clients' private data. However, existing works show that FL is vulnerable to both Byzantine (security) attacks and data reconstruction (privacy) attacks. Almost all the existing FL defenses only address one of the two attacks. 
A few defenses address the two attacks, but they are not efficient and effective enough. 
We propose BPFL, an efficient Byzantine-robust and provably privacy-preserving FL method that addresses all the issues. 
Specifically, we draw on 
state-of-the-art Byzantine-robust FL methods and use similarity metrics to measure the robustness of each participating client  in FL. 
The validity of clients are formulated as circuit constraints on similarity metrics  and verified via a zero-knowledge proof. 
Moreover, the client models are masked by a shared random vector, which is generated based on homomorphic encryption. In doing so, the server receives the masked client models rather than the true ones, which are proven to be private. BPFL is also efficient due to the usage of non-interactive zero-knowledge proof. 
Experimental results on various datasets show that our \name is efficient, Byzantine-robust, and privacy-preserving.

\end{abstract}

\begin{IEEEkeywords}
Federated Learning, Byzantine Robust, Privacy Preserving, Zero-knowledge Proof.
\end{IEEEkeywords}

\section{Introduction}

Federated learning (FL) \cite{mcmahan2017communication}, an emerging distributed learning paradigm, enables multiple clients to collaboratively train a model coordinated by 
a server, where the client data are kept locally and do not share with any other clients/server
during the entire course of learning. 
As data from each participating client do not need to be shared, FL provides a baseline level of privacy.  
Many applications, e.g., 
Google's Gboard \cite{DBLP:journals/corr/abs-1902-01046}, healthcare informatics \cite{xu2021federated},  and credit risk prediction \cite{yang2019federated}, have shown the potential of FL in the real-world.
However, recent works have shown that the current FL design faces both security and privacy issues. 

{On one hand, FL is vulnerable to 
Byzantine attacks
(also called model poisoning attacks) 
\cite{bhagoji2019analyzing,fang2020local, baruch2019little, shejwalkar2021manipulating,bagdasaryan2020backdoor,xie2019dba},
where a few malicious clients can significantly reduce the overall performance by injecting carefully designed poisoned local models during FL training.} 
On the other hand, many works
~\cite{hitaj2017deep,zhu2019deep,wang2019beyond,geiping2020inverting,yin2021see,jeon2021gradient,luo2021feature,balunovic2021bayesian,fowl2021robbing,sun2020provable} show that FL is 
vulnerable to privacy attacks~\cite{NoorbakhshZHW24,arevalo2024task,feng2024universally,sun2021soteria}, particularly 
data reconstruction 
attacks 
where an honest-but-curious server can successfully recover the private client data from the shared client models.

\begin{figure}[!t]
    \centering
    \includegraphics[width=8cm]{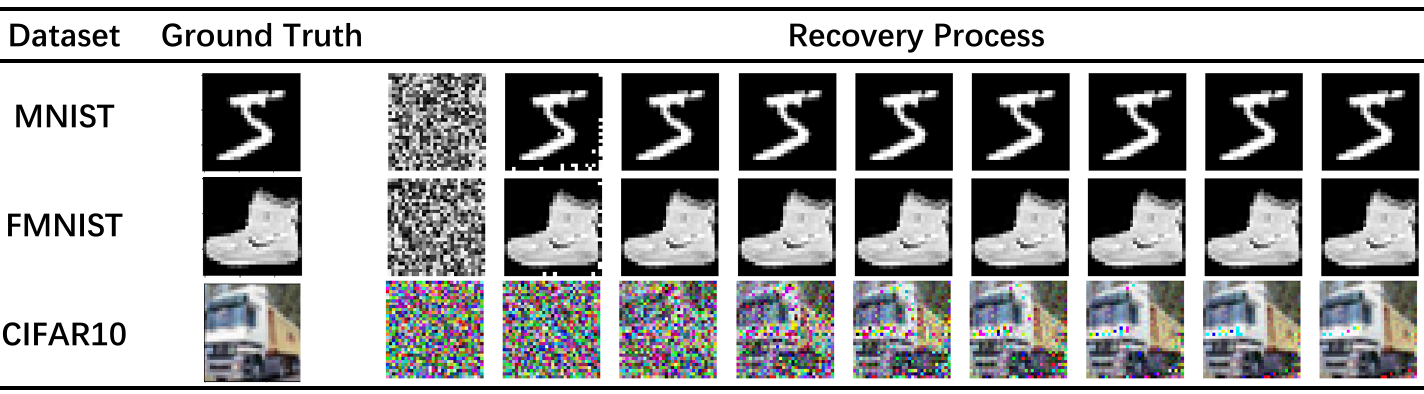}
    \caption{An honest-but-curious server 
    recovers the raw data from the shared client models trained by FLTrust.}
    \label{fig:1}
    \vspace{-2mm}
\end{figure}

To address the security issue, several Byzantine-robust FL methods have been proposed~\cite{blanchard2017machine,chen2017distributed,yin2018byzantine,guerraoui2018hidden,chen2018draco,pillutla2019robust,xie2019zeno,wu2020federated,cao2020fltrust,farhadkhani2022byzantine,karimireddy2021learning,zhang2022fldetector}. Though with different techniques, 
the main idea of these 
schemes is 
that the server performs statistical analysis on client models and 
uncovers malicious client models as those largely deviate from  others based on some similarities metrics. 
For instance, in the state-of-the-art FLTrust~\cite{cao2020fltrust} and FLDetector~\cite{zhang2022fldetector}, the server holds a clean validation dataset and uses it to train a reference model. The server then assigns a trust score to each client model based on the cosine similarity between the client model and the reference model. The trust score will be used as a weight for the local model to participate in the aggregation.
\par

All the existing Byzantine-robust FL methods are based on \emph{plaintext} (i.e., shared client model gradients or parameters). 
We note that these Byzantine-robust FL methods also face the privacy issue. 
For instance, we successfully recover the raw training data from the shared client models trained by the 
FLTrust~\cite{cao2020fltrust} using the DLG attack \cite{zhu2019deep} (See Figure \ref{fig:1}). 
To mitigate privacy attacks, various provably privacy-preserving FL mechanisms have been designed, and they are mainly based on  three techniques: 
differential privacy (DP)~\cite{pathak2010multiparty,shokri2015privacy,hamm2016learning,mcmahan2018learning,geyer2017differentially,wei2020federated}, secure multi-party computation (MPC)~\cite{danner2015fully,mohassel2017secureml,bonawitz2017practical,melis2019exploiting}, and homomorphic encryption (HE)~\cite{aono2017privacy,zhang2020batchcrypt}. 
However, there exist several challenges to prevent these privacy-preserving FL mechanisms  being Byzantine-robust. 
First, the inherent design of these mechanisms is not compatible with Byzantine-robust schemes. This is because almost all the existing privacy-preserving mechanisms only support  linear operations, while Byzantine-robust schemes are often non-linear. 

Second, these mechanisms either have high utility losses (e.g., DP-based), or induce large communication and computation overheads (e.g., MPC-based). 
More details about existing Byzantine-robust and privacy-preserving FL works are shown in Section~\ref{sec:related}.

\par
We advocate that a 
practical FL system should maintain the privacy of the participating clients' data, ensure the robustness~\cite{yang2024distributed, yang2024learning} against malicious clients  
during the entire learning, and be computation and communication efficient.
However, as discussed above, 
all the current FL works only satisfy part of these requirements.   
We aim to design a novel FL that achieves the following {goals} simultaneously:
\begin{enumerate}
  \item \textbf{Byzantine-robust}: The server can detect invalid and malicious local models submitted by clients and refuse them to participate in global model aggregation.
  \item \textbf{Privacy-preserving}: The server cannot infer clients' private  data during the entire FL training. 
  \item \textbf{Efficient}:  
  Our method should not incur too many computation and communication overheads. 
\end{enumerate}

Specifically, we propose an efficient Byzantine-robust and privacy-preserving FL method termed {\bf BPFL}. 
To ensure Byzantine-robust, we draw ideas from existing Byzantine-robust methods such as Krum~\cite{blanchard2017machine} and 
FLTrust~\cite{cao2020fltrust}, where the server in BPFL also holds a clean validation dataset to train a reference model and uses similarity metrics to measure the maliciousness/robustness of client models. 
Unlike existing methods, 
we use the Zero-Knowledge Proof (ZKP)~\cite{goldwasser1989knowledge} to verify the validity of 
client models. 
Particularly, we use both cosine similarity and Euclidean distance as circuit constraints in the ZKP to verify the validity of the client model, and the proofs are generated by the clients and verified by the server. The client model that fails to validate will be rejected to participate in the aggregation. 
To guarantee privacy-preserving, 
the client model submitted to the server will be masked by a random vector; and we design a mask vector negotiation protocol (MVNP), based on HE, 
to generate a shared mask for all clients without sharing each client's data to others. The server then receives the masked client models rather than the true ones and we prove that the server cannot infer any useful data information from the masked client models. Finally, 
due to the non-interactive properties of ZKP and the efficiency to generate circuit constraints,  \name is also  communication and computation efficient. 
In summary, our contributions are as follows:
\begin{itemize}
  \item BPFL seamlessly integrates the ideas of existing Byzantine-robust FL methods,  zero-knowledge proof, and HE into a unified framework. 
  \item Evaluations on synthetic and real-world datasets show BPFL can defend against both Byzantine attacks and data reconstruction attacks, with small computation and communication overheads.
\end{itemize}

\begin{figure*}[tbp]
    \centering
    \includegraphics[width=0.95\textwidth]{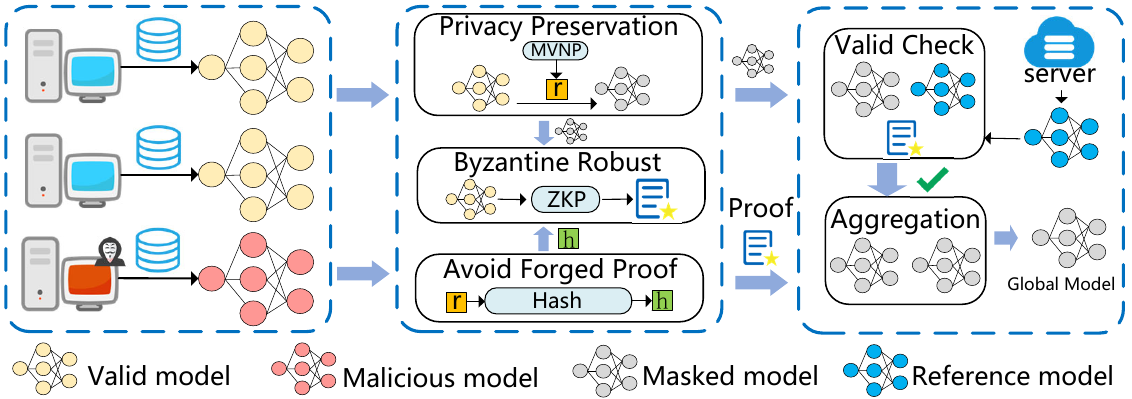}
    \caption{Overview of BPFL. 
    }
    \label{fig:overview}
\end{figure*}

\section{Preliminaries and Threat Model}

{
\noindent{\bf Federated Learning (FL).}
Suppose we are given a set of $n$ clients $C=\{{C_{1},C_{2},\cdots,C_{n}}\}$ and a server $S$, where 
each client $C_{i} \in C$ holds a dataset $D_{i}$. 
At the beginning, the server initializes a global model $w_g^1$.   In each round $t$, each client $C_{i}$ downloads the global model $w_g^{t}$ from the server $S$ and updates its local model $w_{i}^{t} $ by minimizing a task-dependent loss defined on its dataset $D_{i}$.  
Then the server $S$ collects the updated local client models $\{w_{i}^{t}\}$ and updates the global model $w_g^{t+1}$ for the next round via an aggregation algorithm. For instance, when using the most common federated averaging (FedAvg) aggregation~\cite{mcmahan2017communication}, the updated global model is: $ w_g^{t+1}\gets  \sum_{i=1}^{n} \frac{1}{n} w_{i}^{t} $, 
assuming that all clients have the same dataset size. 
}

\noindent {\bf Homomorphic encryption (HE).} Homomorphic encryption (HE) schemes allow certain mathematical operations to be performed directly on ciphertexts, without prior decryption. The Paillier encryption algorithm is an encryption algorithm for additive homomorphism and is semantically secure. The construction of the Paillier encryption system consists of three algorithms: key generation algorithm, encryption algorithm and decryption algorithm.
\begin{itemize}
 \item $(\textit{\textbf{pk}},\textit{\textbf{sk}}) \gets KeyGen(1^{\kappa})$: The $KeyGen$ algorithm outputs a public key $\textit{\textbf{pk}}$ and a private key $\textit{\textbf{sk}}$ with inputting a security parameter $\kappa$. 
	\item $c \gets Enc(\textit{\textbf{pk}},m)$: The $Enc(\cdot)$ algorithm uses the public key $\textit{\textbf{pk}}$ to encrypt plaintext $m$ to ciphertext $c$.
	\item $m \gets Dec(\textit{\textbf{sk}},c)$: The $Dec(\cdot)$ algorithm uses the private key $\textit{\textbf{sk}}$ to decrypt ciphertext $c$ into plaintext $m$.
\end{itemize}
The Paillier encryption scheme supports homomorphic addition operations 
and scalar multiplication operations. For ciphertext $c_{1}=Enc(\textit{\textbf{pk}},m_{1})$ , $ c_{2}=Enc(\textit{\textbf{pk}},m_{2})$ and a scalar $l$, it satisfies:
$Dec(\textit{\textbf{sk}},c_{1} \times c_{2})=m_{1} + m_{2}$
and $Dec(\textit{\textbf{sk}},\textrm{pow}(c_{i},l))=m_{i} \times l$, for $i={1,2}$. 

\noindent {\bf Similarity metrics in Byzantine-robust FL.}
Byzantine-robust FL methods often use similarity metrics for robust aggregation. E.g.,  Krum~\cite{blanchard2017machine} uses the  Euclidean distance, 
while  
FLTrust~\cite{cao2020fltrust} uses the cosine similarity. 
We simply introduce these two metrics, 
which are of interest in this work. 
Given two $m$-dimensional vectors ${\bf u} =\{{u_{1},u_{2},\cdots,u_{m}}\}$ and ${\bf v}=\{{v_{1},v_{2},\cdots,v_{m}}\}$, 
the Euclidean distance between ${\bf u}$ and ${\bf v}$ 
is defined as $s_{Euc}({\bf u},{\bf v}) = \sqrt{\sum\nolimits_{i=1}^{m}({u_i-v_i})^{2}}$, 
and the cosine similarity between ${\bf u}$ and ${\bf v}$ is $s_{cos}({\bf u}, {\bf v})=\frac{{\bf u} \cdot {\bf v}}{\left \| {\bf u} \right \| \left \| {\bf v}\right \|} =\frac{\sum_{i=1}^{m}u_i \cdot v_i}{\sqrt{\sum_{i=1}^{m}(u_i)^{2}}\cdot  \sqrt{\sum_{i=1}^{m}(v_i)^{2}}}$.

\noindent {\bf Zero-Knowledge Proof (ZKP).} 
In this work, we use the Groth16 scheme \cite{groth2016size}, which is a famous Zero-Knowledge Succinct Non-Interactive Arguments of Knowledge (zk-SNARK) technology. Groth16 uses the Rank-1 Constraint System (R1CS) to describe the arithmetic circuit and then converts the R1CS satisfiability problem into a Quadratic Arithmetic Program (QAP)  \cite{gennaro2013quadratic} satisfiability problem. The Groth16 algorithm contains three parts, expressed as $\Pi =(Setup,Prove,Verify)$. Given a polynomial time judgeable binary relation $\mathcal{R}$ described by R1CS, Groth16 has the following three steps
: 
\begin{itemize}
    \item $(pk,vk)\gets Setup(1^{\lambda},\mathcal{R})$: It takes the security parameter $\lambda$ and the relationship $\mathcal{R}$ as inputs. 
    The algorithm reduces the satisfiability problem of arithmetic circuits to the QAP satisfiability problem, and then a proving key $pk$ and a verification key $vk$ are generated. 
    \item $\pi \gets Prove(pk,I_p,I_a)$: The prover generates a proof $\pi$ by taking $pk, I_p, I_a$ as input, where $I_a$ is the auxiliary input that only the prover knows and $I_p$ is the primary input that both prover and verifier know. 
    \item $0/1 \gets Verify(vk,\pi,I_p)$: The verifier verifies the proof $\pi$ by taking $vk, \pi, I_p$ as input. Only if the proof passes the verification the verifier will get 1 else get 0. 
\end{itemize}

\noindent{\bf More details about Groth16.} Groth16 uses the Rank-1 Constraint System (R1CS) to describe the arithmetic circuit and then converts the R1CS satisfiability problem into a Quadratic Arithmetic Program (QAP)  \cite{gennaro2013quadratic} satisfiability problem.The R1CS is a tuple containing seven elements $(\mathbb{F}, \textbf{A},\textbf{B},\textbf{C},\textbf{io},m,n)$, where $ \mathbb{F}$ is the finite field, $\textbf{io}$ is the public input and output vectors, $\textbf{A},\textbf{B},\textbf{C} \in \mathbb{F} ^{m\times m}, m\ge \left | \textbf{io}+1 \right |, n$ is the maximum number of nonzero values in all matrices. 
R1CS is satisfiable if and only if there exists a proof $w \in \mathbb{F} ^{m-\left | \textbf{io} \right | -1} $ such that $(\textbf{A}z)\odot (\textbf{B}z)=(\textbf{C}z)$, where $z=(\textbf{io},1,w)^{T}$ and $\odot$ is the hadamard product. 
The QAP, expressed as $\mathcal{Q}=(t(z),\mathcal{U},\mathcal{W},\mathcal{Y})$, in finite field $\mathbb{F}$ 
contains three polynomials  $\mathcal{U}={u_{k}(z),\mathcal{W}={w_{k}(z)},\mathcal{Y}={y_{k}(z)}(k\in 0\cup\left[m\right])} $ and a target polynomial $t(z)$. Let $(c_{1},c_{2},\cdots ,c_{N})$ be the public input, $\mathcal{Q}$ is \emph{satisfiable} if and only if there exists $(c_{N+1},c_{N+2},\cdots ,c_{m})$ such that $t(z)$ divides $p(z)$, where
\begin{equation}
\begin{split}
p(z)=(u_{0}(z)+\sum_{k=1}^{m}c_{k} \cdot u_{k}(z))\cdot (w_{0}(z)+\sum_{k=1}^{m}c_{k}\cdot w_{k}(z))\\-(y_0(z)+\sum_{k=1}^{m}c_k \cdot y_k(z))\nonumber.
\end{split}
\end{equation}
That is, there exists a polynomial $h(z)$ such that $p(z)-h(z)t(z)=0$.
\par
The three phases in the protocol of Groth16 run as follow: 
\begin{itemize}
    \item Setup Phase.
    \begin{enumerate}
        \item Generate QAP$(t(z),\mathcal{U},\mathcal{W},\mathcal{Y})$ according to arithmetic circuit $C$.
        \item Generate $\mathbb{G}_1,\mathbb{G}_2,\mathbb{G}_T$ which are groups of prime order $p$. The pairing $e:\mathbb{G}_1\times \mathbb{G}_2 \to \mathbb{G}_T$ is a bilinear map. Let $[a]_1$ be $g^a$, $[b]_2$ be $h^b$, $[c]_T$ be $e(g,h)^c$. Select random numbers $\alpha,\beta,\gamma,\delta,s \overset{\$}{\gets}\mathbb{F} $.
        \item Generate reference string $\sigma=([\sigma_1]_1,[\sigma_2]_2)$ and an analog threshold $\tau=(\alpha,\beta,\gamma,\delta,s)$, with 
        \begin{equation}
            \begin{split}
                &\sigma _1=\binom{\alpha,\beta,\delta,\{s^i\}_{i=0}^{d-1},\{ \frac{\beta u_i(s)+\alpha w_i(s)+y_i(s)}{\gamma } \}_{i=0}^{N}}{\{\frac{\beta u_i(s)+\alpha w_i(s)+y_i(s)}{\delta  } \}_{i=N+1}^{m},\{ \frac{s^it(s)}{\delta }\}_{i=0}^{d-2}} ,\\ &\sigma _2=(\beta ,\gamma ,\delta ,\{ s^i\}_{i=0}^{d-1})\nonumber.
            \end{split}
        \end{equation}
    \end{enumerate}
    
    \item Prove Phase. The prover $\mathcal{P}$ generates the proof. $\mathcal{P}$ randomly selects $r_1,r_2\overset{\$}{\gets } \mathbb{F} $ and generates proof $\pi =([A]_1,[C]_1,[B]_2)$, where
    $$A=\alpha +\sum_{i=0}^{m}c_iu_i(s)+r_1\delta ,B=\beta +\sum_{i=0}^{m}c_iw_i(s) +r_2\delta,$$
    \begin{equation}
        \begin{split}
            C=\frac{\sum_{i=N+1}^{m}c_i(\beta u_i(s)+\alpha w_i(s)+y_i(s))+h(s)t(s) }{\delta } \\+Ar_2+Br_1-r_1r_2\delta \nonumber.
        \end{split}
    \end{equation}
    
    \item Verify Phase. Validator $\mathcal{V}$ verifies the proof. The $\mathcal{V}$ check:
    \begin{equation}
        \begin{split}
                &e([A]_1,[B]_2)\overset{?}{=}e([\alpha ]_1,[\beta ]_2)\\ &e(\sum_{i=0}^{N}c_i[\frac{\beta u_i(s)+\alpha w_i(s)+y_i(s)}{\gamma } ]_1,[\gamma ]_2)e([C]_1,[\delta ]_2)\nonumber.
        \end{split}
    \end{equation}

\end{itemize}

\noindent {\bf Threat model.}
We consider  
two types of adversaries:
\begin{itemize}
    \item \textbf{Malicious clients}. Malicious clients can 
    deviate from the protocol to corrupt the global model. Their attack goal includes: (1) submitting a poisoned local model to the server and compromising the final aggregation; (2) attempting to submit forged proofs to deceive the server; (3) failing the valid check of an honest client by providing incorrect proof parameters. Without loss of generality, 
    we assume 
    the number of malicious clients is less than  
    $50\%$ of the total clients. 
    \item \textbf{Honest-but-curious server}. The server follows the protocol, but aims to  infer clients' private information (e.g., raw data) via analyzing the received client models. We assume  the server does not collude with malicious clients. 
    Following 
    \cite{cao2020fltrust}, we also  assume the server holds a small and clean dataset $D_S$, which is used to train a reference model $w_S$ for malicious client models detection.
\end{itemize}
\par 

\begin{figure*}[tbp]
    \centering
    \begin{subfigure}{0.3\textwidth}
        \includegraphics[width=0.9\textwidth]{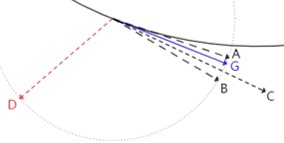}
        \caption{Unable to detect malicious update $\boldsymbol C$ only with cosine similarity}
        \label{fig:short-a}
    \end{subfigure}
    \begin{subfigure}{0.3\textwidth}
        \includegraphics[width=0.9\textwidth]{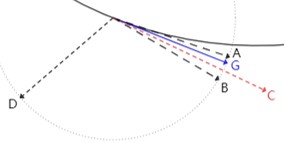}
        \caption{Unable to detect malicious update $\boldsymbol D$ when only using  Euclidean distance}
        \label{fig:short-a}
    \end{subfigure}
    \begin{subfigure}{0.3\textwidth}
        \includegraphics[width=0.9\textwidth]{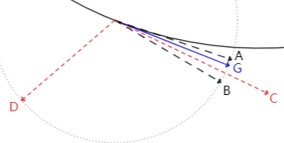}
        \caption{When using both metrics, both malicious update can be detected}
        \label{fig:short-a}
    \end{subfigure}
    \caption{Motivation of using both the cosine similarity and Euclidean distance to detect malicious model updates. $\boldsymbol G$ is the real update, $\boldsymbol A$ and $\boldsymbol B$ are valid updates, while $\boldsymbol C$ and $\boldsymbol D$ are malicious updates. 
    }
    \label{fig:fig_micPerMon}
\end{figure*}

\section{Design of \name}
\label{sec:BPFL}
We give a high-level overview of \name in Figure \ref{fig:overview}. For better understanding, we divide the structure of BPFL into three parts: Byzantine robust part, privacy preservation part, and avoiding forged proof part. In this section, we first introduce our zero-knowledge proof-inspired valid robustness check algorithm that uses the similarity metrics in the state-of-the-art robust FL methods. Next, we show how to use a random mask vector $r$ generated by homomorphic encryption to protect the client's privacy. However, for a malicious client, it may use a forged random mask vector to generate fake proofs to spoof the server, so we further leverage hash for vectors to address the possible forged proofs caused by malicious clients. Finally, we show the entire workflow of our BPFL. The important notations used in this paper are shown in 
Table \ref{tbl:notations}.

\begin{table}[!h]
\center
\caption{Important notations used in the paper.}
\addtolength{\tabcolsep}{8pt}
\label{tbl:notations}
\begin{tabular}{c|c}
\hline
{\bf Notation} & {\bf Meaning}  \\  \hline 
{\bf } $w_i$& the local model for  client  $C_i$ \\ \hline
{\bf } $w_S$& the reference model on server \\ \hline
{\bf } $w_g$& the global model \\ \hline
{\bf } $w_i^t$& $w_i$ in round  $t$  \\ \hline
{\bf } $w_S^t$& $w_S$ in round  $t$  \\ \hline
{\bf } $w_g^t$& $w_g$ in round  $t$  \\ \hline 
{\bf } $r$& random mask vector\\ \hline
{\bf } $\bar{w}_i$& the masked local model for client $C_i$ \\ \hline
{\bf } $\bar{w}_g$& the masked global model  \\

\hline
{\bf } $\tau_c$&  threshold for cosine similarity  \\ \hline
{\bf } $\tau_e$&  threshold for Euclidean distance  \\ \hline
{\bf } $\sigma _t$&  proof parameters for round  $t$  \\ \hline

{\bf } $I_a$&  the auxiliary input\\ \hline
{\bf } $I_p$&  the primary input\\ \hline
{\bf } $\pi_i^t$&  the proof of client $C_i$ for round  $t$  \\ \hline

{\bf } $\tilde{r}$& the forged $r$  \\ \hline
{\bf } $\tilde{w}$& the forged $w$  \\ \hline
{\bf } $\tilde{\pi}$& the forged $\pi$  \\ \hline
\end{tabular}
\end{table}

\subsection{Valid Robustness Check for Local Models {via Zero-Knowledge Proof}}
As shown in the recent works~\cite{blanchard2017machine,guerraoui2018hidden,cao2020fltrust}, 
a Byzantine-robust FL method should require that 
the local model is similar to the global model, in terms of both the model update direction and the model magnitude (also see an example in Figure~\ref{fig:fig_micPerMon}). 
Motivated by this, 
we use both cosine similarity and Euclidean distance to measure the validity of a client model. 
To avoid complex multi-party computation,  
we use non-interactive zero-knowledge proof (ZKP) to perform the valid check. 
Recall that the server holds a benign reference model $w_S$ (trained on its clean dataset $D_S$), which we will treat as the reference for local model comparison.  
We adopt the cosine similarity and Euclidean distance between each client model $w_i$ and the reference model $w_S$. Then, a valid robust local model should satisfy:
\begin{align}
\small
\begin{split}
\label{eq:1}
    \sqrt{\sum_{j=1}^{m}({w_{i}^j-w_{S}^j})^{2}}\le \tau_e, 
    \frac{\sum_{j=1}^{m}w_{i}^j \times w_{S}^j}{\sqrt{\sum_{j=1}^{m}(w_{i}^j)^{2}}\times \sqrt{\sum_{j=1}^{m}(w_{S}^j)^{2}}}\ge \tau_c,  
\end{split}
\end{align}%
where  $\tau_e$  and $\tau_c$  are the threshold for cosine similarity and Euclidean distance, e.g., defined by the server.  
Note that Equation \ref{eq:1} 
contains  square root and division computations that are difficult to be expressed by arithmetic circuits, as they only consist of addition and multiplication gates. 
To address it, we conduct some transformations in order to satisfy the Groth16 scheme. 
Consider that all computations in Groth16 are in a finite field, we use fixed-point numbers to approximate floating-point numbers. By default, we first define, e.g.,  $k=2^{16}$ and transform the number $x$ 
to be $x'$ by setting $x'=kx$, and then simply truncate the fractional part. In doing so, the to be verified Equation \ref{eq:1} in the proof circuit are expressed as: 
\begin{align}
    & \sum_{j=1}^{m}(k{w_{i}^j-kw_{S}^j})^{2}\le (k\tau_e)^2, \label{eq:2} 
    \\ 
    & \big(k\sum_{j=1}^{m}(kw_{i}^j \times kw_{S}^j)\big)^2 \ge (k\tau_c)^2 \times{\sum_{j=1}^{m}(kw_{i}^j)^{2}}\times \sum_{j=1}^{m}(kw_{S}^j)^{2}
\notag
\end{align}
As Equation \ref{eq:2} now can be represented by an arithmetic circuit containing only multiplication gates and addition gates, we use R1CS to describe each gate in the arithmetic circuit, and further generate the proof circuit for Groth16. Particularly, in the iteration $t$, each client $C_i$ downloads the proof parameters $\sigma _t=\{w_S^t,\tau _c,\tau _e\}$ from the server, and defines the primary input $I_p=\{\sigma _t\} $ and auxiliary input $I_a=\{ w_i^t\} $ used in the zero-knowledge proof. Each client then generates a proof using $I_p$ and $I_a$ as input, and the server verifies the proof using $I_p$.

\subsection{Privacy Preservation for Local Models {via  HE}}
Our privacy protection mechanism is based on homomorphic encryption. 
To protect privacy, all clients hold a same 
random vector  
to mask the true local models. 
Here, we propose to add this random vector to all local models. Hence, the global model update in FedAvg is rewritten as:
\begin{align}
\bar{w}_g^{t+1} \gets r+\frac{1}{n}\sum_{i=1}^{n}w_{i}^{t} 
\gets \sum_{i=1}^{n}  \frac{1}{n} ({w}_{i}^{t} + r)
\gets \sum_{i=1}^{n} \frac{1}{n} \bar{w}_{i}^{t},
\label{eqn:randommask}
\end{align}
where $\bar{w}$ is the value of $w$ after being masked; $r$ is the random vector generated by the Mask Vector Negotiation Protocol (MVNP), which leverages the Paillier homomorphic encryption technique~\cite{paillier1999public}. Algorithm \ref{alg:1} 
shows the details 
on  
generating 
$r$.  
In a word, MVNP ensures all clients obtain the same random vector that is secret to the server, i.e., the server cannot infer any private information from the masked local models as well as their aggregation.  
\par

\begin{algorithm}[!t] 
\caption{Mask Vector Negotiation Protocol (MVNP) via Paillier homomorphic encryption} 
\label{alg:1} 
\footnotesize
\textbf{Input}:A set of $n$ clients; a Paillier encryption public key $\textit{\textbf{pk}}$ and a private key $\textit{\textbf{sk}}$;\\ 
\textbf{Output}: A random mask vector $r$
\begin{algorithmic}[1]
\STATE $//$ Client side.
\FOR{$i=C_1,C_2,\cdots,C_n$}
\STATE $C_i$ generate a random seed $s_i$ randomly.
\STATE Encrypt the $s_i$ with $\textit{\textbf{pk}}$: $[s_i]=Enc(s_i,\textit{\textbf{pk}})$.
\STATE Send $[s_i]$ to the server.
\ENDFOR
\STATE $//$ Server side.
\STATE Compute $[s]=\sum_{i=1}^{n}[s_i]$.
\STATE Broadcast $[s]$.
\STATE $//$ Client side.
\STATE Decrypt $[s]$: $s=Dec([s],\textit{\textbf{sk}})$.
\STATE Generate a random mask vector $r$ over the finite field with the seed $s$.
\RETURN $r$
\end{algorithmic}
\end{algorithm}

\setlength{\textfloatsep}{2mm}

A possible issue is, since the server receives the masked local models, a malicious client may use the valid model to generate the proof, but submits an invalid model  to the server. For example, a malicious client generates a proof using the correct ${w_i}$ and $r$, but submits a wrong mask model using multiplication operations to the server, i.e.  $\bar{w}_i^t=w_i^t*r$, which causes aggregation fails. Hence, the server needs to verify that the model submitted by the client is consistent with the model used to generate the proof. Specifically,  the server needs to use the masked client model $\bar{w_i}$ to check whether each local model  satisfies $\bar{w}_i^t=w_i^t+r$. 
Therefore, $r$ will also be a parameter for the prover to generate a proof and $\bar{w}_i^t$ will be a primary parameter. So, $I_a$ will be  updated to $I_a=\{ w_i^t, r\}$ and $I_p$ will be updated to $I_p=\{\bar{w}_i^t,\sigma _t\}$.

\noindent \textbf{Remark:} Note that the server can subtract two different local models, e.g., $w_{i+1}^{t}-w_{i}^{t}=\bar{w}_{i+1}^{t}-\bar{w}_{i}^{t}$, to remove the effect of the mask value. But it is limited in our scenario as the server only can get the difference of two different local models and the model weights are still secret to the server. Moreover, to our best knowledge, there exist no methods that can recover the training data  only based on the differences between model updates.

\setlength{\textfloatsep}{+2mm}

\subsection{Avoiding Forgery of Proof {via Hash}} 
\label{3.4}
In each iteration $t$, 
clients download  parameters (including $w_S^t$ in plaintext) from the server to generate the proof.
However, a malicious client may use $w_S^t$ as local model to generate a forged proof to fool the server into passing the valid check. 
Assuming a malicious client $C_i$ holds a masked malicious model  $\tilde{w}_i^{t}$ and has downloaded $w_S^{t}$, it can generate a forged proof by letting $\tilde{\pi} _i^t=Prove(pk,I_p,\tilde{I}_a)$, where $\tilde{I}_a=\{w_S^t,\tilde{r} \}$ and $\tilde{r}$ is well-constructed vector defined as    $\tilde{r}=\tilde{w}_i^{t}-w_S^{t}$. The client $C_i$ submits $\tilde{w}_i^{t}$ with the proof $\tilde{\pi} _i^t$ and server will get $1 \gets Verify(vk,\tilde{\pi}_i^t,I_p)$. 
This is because $w_S^t$ in $\tilde{I}_a=\{w_S^t,\tilde{r} \}$ and $w_S^t$ in $I_p=\{\tilde{w}_i^t,\sigma _t\}$ ($\sigma _t=\{w_S^t,\tau _c,\tau _e\}$) satisfy Equation \ref{eq:2},  
and $\tilde{w}_i^{t}$ in $I_p=\{\tilde{w}_i^t,\sigma _t\}$ and $w_S^{t}$, $\tilde{r}$ in $\tilde{I}_a=\{w_S^t,\tilde{r} \}$ satisfy $\tilde{w}_i^{t}=w_S^{t}+\tilde{r}$. Therefore, this malicious model will be allowed to join the aggregation, which will have a negative impact on the global model.
\par
To avoid this, the server must check whether the random mask  $r$ used by clients to generate the proof is the true vector negotiated by MVNP. We propose to use hash function to solve this problem. 
The client $C_i$ generates the hash value $h_i$ of $r_i$ by letting: $h_i=hash(r_i)$, 
and then sends $h_i$ to server. 
The server maintains a set $H,h_i\in H$ and let $h=M\!O\!D\!E(H) $, where $M\!O\!D\!E(H)$ function finds the most frequent value in the set $H$. Based on clients' hash values and their mode, though all malicious clients (less than  50\%) may refuse to compute the true $h_i$, the true $h_i$'s of more than 50\% honest clients are sufficient for the server to obtain the correct $h$. To ensure the proof is not forged, the server only needs to add a constraint to check whether the hash value of $r_i$ used to generate the proof is correct.
\par

\begin{figure*}[!h]
    \centering
    \includegraphics[width=0.8\textwidth]{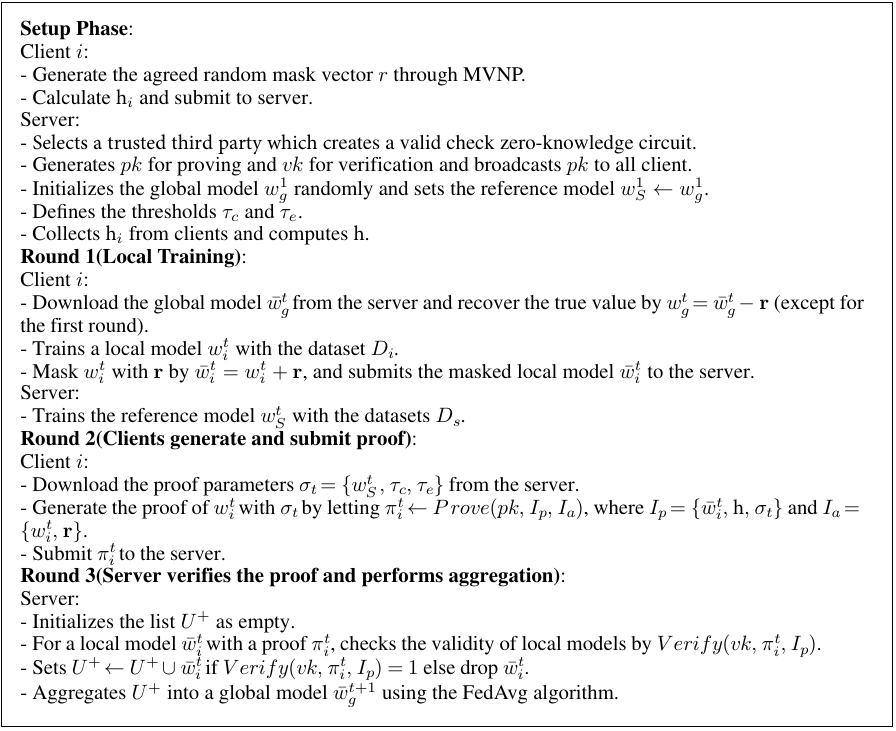}
    \caption{The whole procedure of BPFL.}
    \label{fig:detail_of_bpfl}
\end{figure*}
\subsection{\name Workflow}
The overall workflow of \name 
involves a setup phase followed by three rounds. The whole procedure of BPFL are shown in Figure \ref{fig:detail_of_bpfl}.

\textbf{Setup Phase}. 
The server (1) selects a trusted third party which creates a valid check circuit, generates $pk$ for proving and $vk$ for verification, the server broadcasts $pk$ to all clients; and (2) randomly initializes the global model $w_g^1$ 
and sets the reference model $w_S^1 \gets w_g^1$, and defines the thresholds $\tau_c$ and $\tau_e$. All clients obtain the agreed random mask vector $r$ through MVNP (Algorithm 1), calculate each $h_i$, and send them to the server. The server computes $h$ as the final value. 

\textbf{Round 1 (Local training)}. In iteration $t$, each client $C_i$ first downloads the global model $\bar{w}_g^t$ from the server and recovers the true value by $w_g^t =\bar{w}_g^t-r$ (except for the first round). Then each client $C_i$ trains a local model $w_i^t$ with the dataset $D_i$ using the true global model $w_g^t$, masks $w_i^t$ with $r$, and submits the masked model  $\bar{w}_i^t=w_i^t+r$ to the server. The server trains the reference model $w_S^t$ with the datasets $D_s$, which will be used as one of the parameters for this round to generate a proof.

\textbf{Round 2 (Clients generate and submit proof)}. 
All clients first get the proof parameters $\sigma _t=\{w_S^t,\tau _c,\tau _e\}$ from the server. Then each $C_i$ generates the proof of $w_i^t$ with $\sigma_t$  
by letting $\pi _i^t=Prove(pk,I_p,I_a)$, where $I_p=\{\bar{w}_i^t,h,\sigma _t\}$ 
and $I_a=\{w_i^t, r\}$ and submits $\pi_i^t$ to the server. 

\textbf{Round 3 (Server verifies the proof and performs aggregation)}.  
The validity of the proof will be checked and the well-formed local models will be aggregated by the server $S$. 
Specifically, the server maintains a list, $U^+$ (initialized as empty), 
of local models it has so far identified as valid. The  server checks the validity of all models by verifying the proofs. For a local model $\bar{w}_i^t$ with a proof $\pi_i^t$, if the proof passes the verification, i.e., $Verify(vk,\pi_i^t,I_p)=1$, then the server treats $\bar{w}_i^t$ as a valid model and augments the set 
$U^+ \gets U^+\cup \bar{w}_i^t$. Otherwise, the local model will be flagged as malicious and dropped. 
Every masked local model $\bar{w}_i^t$ in $U^+$ will be aggregated into a global model $\bar{w}_g^{t+1}$ using the FedAvg algorithm and be downloaded by all clients used for the next iteration. 
\par 
\label{tip:update_global_model}
\par

\par

\begin{table}[!t]
    \centering
    \addtolength{\tabcolsep}{4pt}
    \caption{The computation and communication of BPFL}
    \begin{tabular}{ccc}
    \hline
         & Computation cost & Communication cost \\ \hline
        Server & $O(nd)$ & $O(nd)$ \\ 
        Client & $O(d\log d)$ & $O(d)$ \\ \hline
    \end{tabular}
    \label{table:cost}
\end{table}

\section{Theoretical Analysis}
\label{sec:theory}
\noindent {\bf Complexity analysis.}
We show the complexity of \name w.r.t. \#clients $n$ and  \#model parameters $d$. Table \ref{table:cost} show the results, the details of the analysis are outlined below. 

\noindent \emph{1) Computation cost.} In each iteration, each client's computation cost 
can be split into two parts: (1) creating the masked local model is $O(d)$ in \textbf{Round 1}; (2) generating the zero-knowledge proof in \textbf{Round 2}. The prover's computation complexity of Groth16 is $O(m\log m)$, where $m$ is the number of gates in the circuit. According to our valid check constraints in Equation \ref{eq:2}, Equation \ref{eqn:randommask}, and hash function, the number of gates $m$ increases linearly in $d$, so the complexity is $O(d\log d)$; (3) recovering the true global update is $O(d)$ in \textbf{Round 3}. Thus, the overall computation complexity of each client per iteration is $O(d\log d)$. The computation cost of server is mainly at \textbf{Round 3}, which includes verifying the proof for all clients and computing the final aggregation. The verifier's computation 
of Groth16 is $O(1)$ and the aggregation complexity is $O(nd)$. 
Hence, the total computation complexity of the server per iteration is $O(nd)$. 

\noindent \emph{2) Communication cost.} In each iteration, the client sends the local model, which has a $O(d)$ complexity, and a proof, which has a fixed size and the complexity is $O(1)$, to the server. Thus, the communication complexity for every client is $O(d)$. The server's communication costs include: (1) sending the global model to all clients which has a $O(nd)$ complexity; (2) sending the proof parameters to all clients which have a $O(nd)$ complexity. Hence, the overall communication complexity of the server is $O(nd)$.

\noindent {\bf Security analysis.}
We formally 
show the provable security guarantees of 
\name~in the following theorems. 

\begin{restatable}[]{theorem}{privacy}
\label{thm:privacy}
\name~is privacy-preserving, i.e., the honest-but-curious server can not infer clients’ private data if Paillier homomorphic encryption used in Algorithm \ref{alg:1} is semantically secure and Gorth16 is zero-knowledge.
\end{restatable}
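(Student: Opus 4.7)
The plan is to argue privacy via a standard simulation-based reduction: I will construct a simulator $\mathsf{Sim}$ whose output is computationally indistinguishable from the honest-but-curious server's view of a real BPFL execution, yet depends on no client's private data $D_i$, true local update $w_i^t$, or shared mask $r$. First I would enumerate the server's view across the protocol, which consists of (i) the Paillier ciphertexts $\{[s_i]\}$ sent during MVNP, (ii) the hash values $\{h_i\}$ and their mode $h$, (iii) the masked local models $\{\bar{w}_i^t\}$ in every round, (iv) the Groth16 proofs $\{\pi_i^t\}$, and (v) the public parameters $\sigma_t$ and the public parts of $I_p$. Everything else the server touches (the aggregated $\bar{w}_g^{t+1}$, verification outputs) is a deterministic function of these items.

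Next I would run a hybrid argument over three hybrids. In $H_1$, replace each $[s_i]=\mathrm{Enc}(\textbf{pk},s_i)$ with $\mathrm{Enc}(\textbf{pk},0)$; indistinguishability from $H_0$ (the real execution) follows directly from the semantic security of Paillier, since the server never holds $\textbf{sk}$ and the broadcast $[s]=\prod_i[s_i]$ is only decrypted by clients. In $H_2$, replace every Groth16 proof $\pi_i^t$ with one produced by the zk-SNARK simulator on input $(vk, I_p)$ and the trapdoor; indistinguishability follows from the zero-knowledge property of Groth16, which guarantees that $\pi_i^t$ reveals nothing about the witness $I_a=\{w_i^t,r\}$ beyond the truth of the relation. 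In $H_3$, replace $\bar{w}_i^t=w_i^t+r$ by a uniform vector $u_i^t$ over the finite field; since $r$ is generated from a seed $s$ that the server cannot recover (by $H_1$), the one-time pad argument makes each individual $\bar{w}_i^t$ statistically uniform. Because $H_3$ is produced entirely from public parameters and independent randomness, $\mathsf{Sim}$ has no access to $D_i$, so the server's view is independent of client data and hence cannot be used to infer it.

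The main obstacle, and where the proof must be stated carefully, is that the \emph{same} mask $r$ is reused across all clients and all rounds, so $H_3$ cannot truly replace every $\bar{w}_i^t$ with an independent uniform vector: the server observes the exact differences $\bar{w}_i^t-\bar{w}_j^t=w_i^t-w_j^t$ and $\bar{w}_i^{t+1}-\bar{w}_i^t=w_i^{t+1}-w_i^t$. The clean formal statement I would prove is therefore that the server's view is simulatable given only this set of honest model differences, after which I would invoke the discussion in the Remark to argue that no known reconstruction attack (e.g., DLG-style inversion) operates on model differences alone, since these eliminate the absolute weight information that gradient-inversion attacks require. If a fully unconditional statement is desired, an alternative is to resample $r$ each round via a fresh MVNP invocation, in which case $H_3$ goes through verbatim and the theorem holds without appealing to the limitations of existing attacks; I would flag this trade-off explicitly in the proof.
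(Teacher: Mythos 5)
Your proposal follows essentially the same route as the paper: both argue Theorem~\ref{thm:privacy} by constructing a simulator and walking through hybrids that invoke, in turn, the semantic security of Paillier for the MVNP ciphertexts, the zero-knowledge property of Groth16 for the proofs $\pi_i^t$, and a one-time-pad-style step for the masked models $\bar{w}_i^t = w_i^t + r$. Two differences are worth flagging. First, you enumerate the hash values $h_i$ in the server's view but never simulate them; since $h_i = hash(r)$ depends on the secret mask, the simulator must replace these too --- the paper devotes its Hyb$_2$ to exactly this step (citing collision resistance of the hash, although a hiding or random-oracle-type property is what is actually needed there). Second, and more substantively, your third paragraph confronts the genuine weakness that the paper's formal proof does not: because a single $r$ is reused across all clients and all rounds, the joint distribution of $\{\bar{w}_i^t\}$ is not that of independent uniform vectors, and the server provably learns every difference $w_i^t - w_j^{t'}$. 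The paper's Hyb$_3$ simply asserts the ideal and real distributions are ``distributed exactly as real-world'' and relegates the difference leakage to an informal Remark carrying no cryptographic guarantee; your reformulation --- prove simulatability of the view \emph{given} the honest model differences, then argue separately (and only heuristically) that differences do not enable reconstruction --- is the more defensible statement. Do note, however, that your suggested alternative of refreshing $r$ each round does not make $H_3$ go through verbatim: within a round all clients still share the same mask, so cross-client differences in that round remain visible, and only the weaker difference-conditioned simulatability claim is provable under the stated assumptions.
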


\begin{restatable}[]{theorem}{complete}
\label{thm:complete}
\name~accepts local models of honest clients. 
\end{restatable}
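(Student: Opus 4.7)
The plan is to reduce acceptance to a verification that every R1CS constraint in the proof circuit is met by the honest client's witness, after which the completeness of Groth16 immediately yields $Verify(vk,\pi_i^t,I_p)=1$ and hence inclusion of $\bar w_i^t$ in $U^+$ during aggregation.

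Concretely, I would fix an honest client $C_i$ in iteration $t$ and walk through the four families of gates in the arithmetic circuit, checking each holds on the honest witness $I_a=\{w_i^t,r\}$ against the public input $I_p=\{\bar w_i^t,h,\sigma_t\}$ with $\sigma_t=\{w_S^t,\tau_c,\tau_e\}$. First, the mask-consistency gate $\bar w_i^t = w_i^t + r$ is satisfied by construction, since an honest client forms its submitted vector in exactly this way from the MVNP output. Second, for the hash gate $hash(r)=h$, Algorithm~\ref{alg:1} guarantees that every honest client recovers the same seed $s$ and hence the same mask $r$; because the honest fraction exceeds $50\%$, the mode $h$ computed by the server necessarily equals $hash(r)$ for each such client. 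Third, the similarity gates in Equation~\ref{eq:2} require that the honest update be sufficiently close, in both Euclidean norm and cosine direction, to the server's reference model $w_S^t$ trained on the clean $D_S$; since honest clients begin from the common global model and train on in-distribution data, and since $\tau_c,\tau_e$ are chosen (as in FLTrust-style defences) to encompass the natural benign drift, both inequalities hold. Once all gates are satisfied on the honest assignment, perfect completeness of Groth16 gives $Verify(vk,\pi_i^t,I_p)=1$, and the workflow then appends $\bar w_i^t$ to $U^+$.

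The main obstacle is not the cryptographic completeness, which is inherited directly from Groth16, but the statistical claim supporting the similarity constraints: strictly speaking, $\tau_c$ and $\tau_e$ must be calibrated so that the combined effect of data heterogeneity between $D_i$ and $D_S$, stochastic mini-batch noise, and the fixed-point rescaling $x\mapsto kx$ with subsequent truncation never pushes an honest update across the threshold. I would address this either by stating an explicit assumption of the form ``$(\tau_c,\tau_e)$ are chosen so that every honest local update satisfies Equation~\ref{eq:2}'' or by arguing empirically from server-side warm-up rounds; the fixed-point rounding error, being of order $1/k$ with $k=2^{16}$, is absorbed into the same threshold margin and therefore contributes only a negligible additional slack.
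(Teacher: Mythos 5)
Your proposal follows essentially the same route as the paper's proof: verify that the honest witness satisfies each circuit constraint (the similarity inequalities of Equation~\ref{eq:2}, the mask relation $\bar w_i^t = w_i^t + r$, and the hash check on $r$), then conclude $Verify(vk,\pi_i^t,I_p)=1$ and inclusion in $U^+$. You are in fact more careful than the paper, which simply asserts that an honest client's update ``satisfies the cosine similarity constraint and the Euclidean distance constraint'' without acknowledging that this is an assumption on the calibration of $\tau_c$ and $\tau_e$ relative to benign drift and fixed-point truncation --- the explicit assumption you propose is exactly what the paper's argument implicitly relies on.
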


\begin{restatable}[]{theorem}{soundness}
\label{thm:soundness}
All updates accepted by \name are valid with probability $1-negl(\kappa)$, where $negl(\cdot)$ is the negligible function and $\kappa$ is the security parameters. 
\end{restatable}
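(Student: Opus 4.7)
The plan is to reduce soundness of \name\ to the knowledge-soundness of Groth16, the second-preimage resistance of the hash function used in MVNP, and the honest-majority assumption on clients. The argument proceeds contrapositively: from a PPT adversary who induces the server to accept an invalid update with non-negligible probability, we would construct an attacker against one of these three primitives.

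First, I would make precise what ``valid'' means for a masked submission $\bar{w}_i^t$ accepted in Round 3. The update is valid exactly when there exist $w_i^t$ and $r$ such that (i) $\bar{w}_i^t = w_i^t + r$; (ii) the Euclidean-distance and cosine-similarity bounds of Equation \ref{eq:2} hold for $w_i^t$ against the reference model $w_S^t$ with thresholds $\tau_e,\tau_c$ contained in $\sigma_t$; and (iii) the mask $r$ is the genuine MVNP output $r^\ast$ rather than a self-chosen vector (this is the scenario ruled out in Section \ref{3.4}). Since by the threat model strictly more than half of the clients are honest, every honest $C_j$ contributes $hash(r^\ast)$ to the set $H$, so the server's digest $h = M\!O\!D\!E(H) = hash(r^\ast)$ with probability $1$.

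Next, I would invoke the knowledge-soundness of Groth16. The server accepts only if $Verify(vk,\pi_i^t,I_p)=1$ on the public input $I_p = \{\bar{w}_i^t, h, \sigma_t\}$. By Groth16 soundness there exists a PPT extractor that outputs a witness $I_a = \{w_i^t, r\}$ satisfying every R1CS constraint of the valid-check circuit, except with probability $\mathrm{negl}_1(\kappa)$. Because the circuit simultaneously encodes (i), (ii), and the equality $hash(r) = h$, the extracted witness automatically witnesses conditions (i) and (ii); only condition (iii) remains to be established, i.e.\ that the extracted $r$ equals the true mask $r^\ast$.

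Closing the gap between ``$hash(r)=h$ inside the circuit'' and ``$r=r^\ast$'' is the main obstacle, and here the hash function's cryptographic strength is essential. Any extracted witness with $r \neq r^\ast$ but $hash(r) = hash(r^\ast)$ is by definition a second preimage of $r^\ast$ under $hash(\cdot)$; by hypothesis this event occurs with at most $\mathrm{negl}_2(\kappa)$. A union bound over the Groth16 extraction-failure event and the second-preimage event gives that every accepted update is valid with probability at least $1 - \mathrm{negl}_1(\kappa) - \mathrm{negl}_2(\kappa) = 1 - \mathrm{negl}(\kappa)$, establishing the theorem.
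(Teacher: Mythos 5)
Your proposal is correct and rests on the same three pillars as the paper's proof---the soundness of Groth16, the collision/second-preimage resistance of the hash binding $r$ to the MVNP output, and the honest-majority assumption guaranteeing $h=M\!O\!D\!E(H)=hash(r^\ast)$---but it is organized quite differently. The paper argues by enumerating concrete adversarial strategies: (1) masking an invalid model, (2) applying a wrong masking operation to a valid model, (3) masking with a wrong vector, and (4) the specific forgery of Section \ref{3.4} where the client sets $\tilde{r}=\tilde{w}_i^t-w_S^t$; each case is then killed by one of the circuit constraints (Equation \ref{eq:2}, Equation \ref{eqn:randommask}, or the hash check). You instead give a single generic reduction: define validity as the existence of a witness $(w_i^t,r^\ast)$ satisfying all constraints, invoke the knowledge extractor to obtain some witness $(w_i^t,r)$ from any accepting proof, and close the gap $r=r^\ast$ via second-preimage resistance, finishing with a union bound. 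Your route buys uniformity---it does not require arguing that the four enumerated behaviors are exhaustive, which the paper leaves implicit---and it makes explicit where knowledge-soundness (rather than mere soundness) of Groth16 is needed, namely to even speak of ``the $r$ the prover used.'' The paper's route buys concreteness: it shows the reader exactly which constraint defeats which attack, in particular why the mask-consistency constraint $\bar{w}_i^t=w_i^t+r$ and the hash constraint must \emph{both} be present. The two arguments are complementary and reach the same $1-negl(\kappa)$ bound.
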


\noindent\emph{Proof of Theorem \ref{thm:privacy}:}

Before representing the proof, we will need to involve the following definition.
\begin{definition}
{A protocol is privacy-preserving  if, for every efficient real-world adversary $\mathcal{A}$, there exists an efficient ideal-world simulator $\mathcal{S}_{\mathcal{A}}$ such that for every efficient environment $\mathcal{E}$ the output of $\mathcal{E}$ when interacting with the adversary $\mathcal{A}$ in a real-world execution and  when interacting with the simulator $\mathcal{S}_{\mathcal{A}}$ in an ideal-world execution are computationally indistinguishable.}
\end{definition}

\privacy*
\begin{proof}
We use the hybrid argument~\cite{cryptoeprint:2021/088} to prove that our \name  satisfies the security Definition 1. To do this, for every real-world (efficient) adversary $\mathcal{A}$, we construct an ideal-world (efficient) simulator $\mathcal{S}$ whose random variable is distributed exactly as real-world.  
The inputs of simulator $\mathcal{S}$ are $s_i$ and $w_i$ that corresponding to the real world, 
We use a sequence of hybrids, each identified by Hyb$_i$ and denote by Output$_{i}(\mathcal{E})$ the output of $\mathcal{E}$ in Hyb$_{i}$, and by Hyb$_{0}$ the real-world execution.
\begin{itemize}
    \item Hyb$_{1}$. Let Hyb$_{1}$ be the same as Hyb$_{0}$, except the followings: The simulator $\mathcal{S}$ replace $s_i$ with random values and send its encrypted value to $\mathcal{E}$.  Given the Paillier homomorphic encryption is semantically secure, Output$_{1}(\mathcal{E})$ is perfectly indistinguishable from Output$_{0}(\mathcal{E})$. 
    \item Hyb$_{2}$. In this hybrid, all clients get the $s$ from server and generate a random vector $r$. Instead of gettting the $r$'s hash value, the simulator $\mathcal{S}$  replaces $r$ with random values (e.g., 0, with appropriate length), then gets the random vector's hash value, and sends it to $\mathcal{E}$. 
    The collision resistance of hash function guarantees Output$_{2}(\mathcal{E})$ is indistinguishable from Output$_{1}(\mathcal{E})$. 
    \item Hyb$_{3}$. In this hybrid, the  simulator $\mathcal{S}$ changes the behavior of all honest clients. Specifically, for each client $C_i$, a uniformly random number is selected to replace the $w_i$. $\mathcal{S}$ receives $w_i$ from $\mathcal{E}$ and outputs the masked $\bar{w}_i$. Because the ideal-world random variable is distributed exactly as real-world, Output$_{3}(\mathcal{E})$ is perfectly indistinguishable from Output$_{2}(\mathcal{E})$.
    \item Hyb$_{4}$. In this hybrid, the simulator $\mathcal{S}$ uses a uniformly random number to replace the $w_i$ and generate a  proof $\pi_i$, then sends it to $\mathcal{E}$. Because the Groth16 algorithm is zero-knowledge, Output$_{4}(\mathcal{E})$ is perfectly indistinguishable from Output$_{3}(\mathcal{E})$.
\end{itemize}
Therefore, the simulator has already completed the simulation since $\mathcal{S}$ successfully simulates real-world without knowing anything about the private inputs. The final hybrid is distributed identically to the operation of $\mathcal{S}$ from the point of view of $\mathcal{E}$. We have thus shown that $\mathcal{E}$'s advantage in distinguishing the interaction with $\mathcal{S}$ from the interaction with $\mathcal{A}$ is negligible.
\end{proof}

\noindent\emph{Proof of Theorem \ref{thm:complete}:}
\complete*
\begin{proof}
The client $C_i$ that$C_i \in C_H$, has a local update $w_i$ that similar to the global update, i.e., satisfies the cosine similarity constraint and the Euclidean distance constraint with the client reference model $w_s$, so that the $w_i$ satisfies Equation \ref{eq:2}. Additionally, $C_i$, by virtue of being honest, will perform the computation using the correct mask value and submit to the server the correctly computed $\bar{w}_i$. The $w_i, r, \bar{w}_i$ also satisfies Equation \ref{eqn:randommask} and hash function. The server will get $Verify(w_i)=1$, i.e., \name accepts the local model of every honest client. This theorem shows the completeness property of \name.
\end{proof}

\noindent\emph{Proof of Theorem \ref{thm:soundness}:}
\soundness*
\begin{proof}
All local models accepted by the server imply $Verify(w_i)=1$. This is true for an honest client according to Theorem \ref{thm:complete}. For a malicious client, it may submit the invalid mask vector hash value to the server during the setup phase, but this does not affect the correct execution of the protocol, because more than half of the honest clients ensure that the server gets the correct mask vector check value. A malicious client may also behave in the following ways:
\begin{enumerate} 
    \item A malicious client masks an invalid local model and submits it to the server. This invalid local model satisfies Equation \ref{eq:2} with probability $negl(\kappa)$ and the server reject the local model with probability $1-negl(\kappa)$.
    \item A malicious client performs an incorrect mask operation on a valid local model and submits it to the server. Since Equation \ref{eqn:randommask} is not satisfied, it will be rejected by the server.
    \item A malicious client uses an incorrect mask vector to perform a masking operation on a valid local model and submits it to the server. The server will check the correctness of the mask vector by the hash value of the mask vector, and since the hash value of the incorrect mask vector is equal to the correct hash value with probability $negl(\kappa)$, the server rejects the local model with probability $1-negl(\kappa)$.
    \item In Round 2, the malicious client attempts to generate a forged proof using the server reference model in the received proof parameters (refer to Section \ref{3.4} for more details). In this process, the malicious client is trying to make the local model satisfy Equation \ref{eqn:randommask}, so it will compute an incorrect random mask vector. The hash value of this incorrect mask vector is equal to the correct hash value with probability $negl(\kappa)$, and the server reject the local model with probability $1-negl(\kappa)$.
\end{enumerate}
So, all updates accepted by \name are valid with probability $1-negl(\kappa)$. This theorem shows the soundness property of \name.
\end{proof}

\begin{figure*}[!t]
    \centering
    \begin{subfigure}{0.24\linewidth}
        \includegraphics[width=1.0\textwidth]{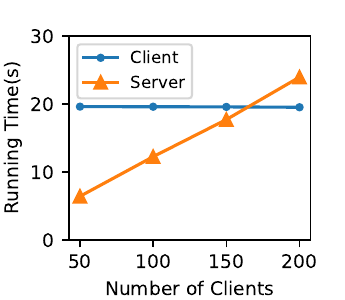}
        \caption{$d=50K$}
        \label{fig:overhead_a}
    \end{subfigure}
    \begin{subfigure}{0.24\linewidth}
        \includegraphics[width=1.0\textwidth]{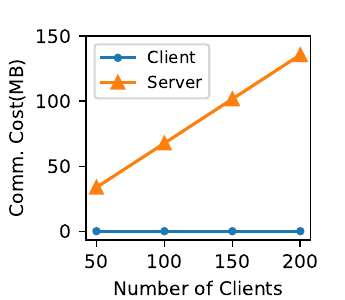}
        \caption{$d=50K$}
        \label{fig:overhead_b}
    \end{subfigure}
    \begin{subfigure}{0.24\linewidth}
        \includegraphics[width=1.0\textwidth]{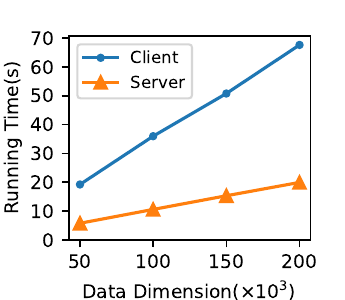}
        \caption{$n=50$}
        \label{fig:overhead_c}
    \end{subfigure}
    \begin{subfigure}{0.24\linewidth}
        \includegraphics[width=1.0\textwidth]{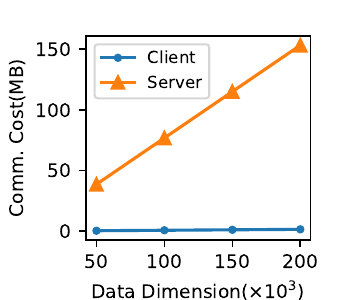}
        \caption{$n=50$}
        \label{fig:overhead_d}
    \end{subfigure}
    \caption{The complete overhead analysis of BPFL. 
    }
    \label{fig:overhead_evaluation}
\end{figure*}
\begin{figure*}[!t]
    \centering
    \begin{subfigure}{0.24\linewidth}
        \includegraphics[width=0.9\textwidth]{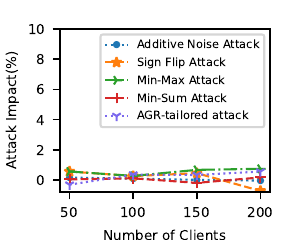}
        \caption{MNIST}
        \label{fig:client_num_a}
    \end{subfigure}
    \begin{subfigure}{0.24\linewidth}
        \includegraphics[width=0.9\textwidth]{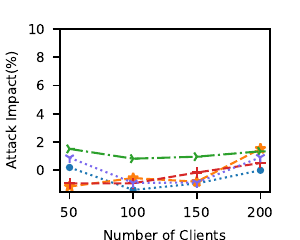}
        \caption{FMNIST}
        \label{fig:client_num_b}
    \end{subfigure}
    \begin{subfigure}{0.24\linewidth}
        \includegraphics[width=0.9\textwidth]{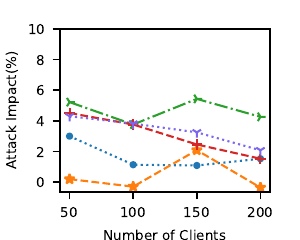}
        \caption{CIFAR10}
        \label{fig:client_num_c}
    \end{subfigure}
    \begin{subfigure}{0.24\linewidth}
        \includegraphics[width=0.9\textwidth]{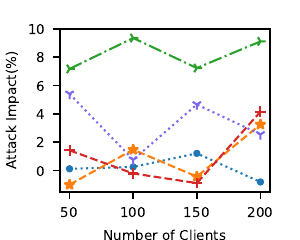}
        \caption{FEMNIST}
        \label{fig:client_num_d}
    \end{subfigure}
    \caption{Attack impact on BPFL under different model poisoning attacks and datasets vs. \#total clients. 
    }
    \label{fig:testing_error_with_client}
    \vspace{-6mm}
\end{figure*}

\begin{figure}[!t]
    \centering
    \begin{subfigure}{0.45\linewidth}
        \includegraphics[width=1.0\textwidth]{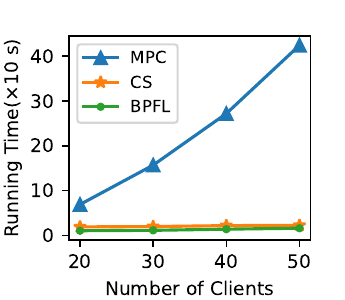}
        \caption{$d=50K$}
        \label{fig:mpc_overhead_a}
    \end{subfigure}
    \begin{subfigure}{0.45\linewidth}
        \includegraphics[width=1.0\textwidth]{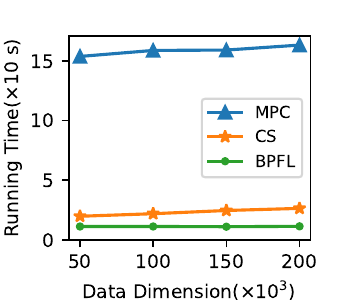}
        \caption{$n=30$}
        \label{fig:mpc_overhead_b}
    \end{subfigure}
    \caption{Comparing overhead of BPFL, CS and MPC.}
    \label{fig:overhead_evaluation_compare_mpc}
\end{figure}

\section{Experiments}
\subsection{Experimental Setup}
We implement \name in Python and use the C++  libsnark  library for zkSNARK proofs. We run experiments on four 
machines, each  with 40 Intel(R) Xeon(R) Silver 4210 CPU at 2.20GHz, 64GB memory, and an NVIDIA 
GTX 1080 Ti GPU. All experiments are under Ubuntu 20.04.4 LTS.
\par

The whole BPFL procedure is illustrated in Figure~\ref{fig:detail_of_bpfl}. 
We evaluate \name  on four image datasets, i.e., \textbf{MNIST}, \textbf{FMNIST}, \textbf{CIFAR10}, and \textbf{FEMNIST}, where the first three datasets are independent identically distributed (IID)  and the last one is non-IID distributed. 

\begin{itemize}
    \item \textbf{MNIST}. A handwritten digit dataset consisting of 60K training images and 10K test images with ten classes. We train a CNN architecture that has five layers and 24,000 parameters for MNIST.
    \item \textbf{FMNIST}. A dataset has a predefined training set of 60K fashion images and a testing set of 10K fashion images. We use LeNet-5 to experiment on FMNIST.
    \item \textbf{CIFAR-10}. A dataset contains RGB images with ten object classes. It has 50K training and 10K test images. We use ResNet-20 and 294,000 parameters for our experiments on CIFAR-10.
    \item \textbf{FEMNIST}. Federated Extended MNIST, built by partitioning the data in Extended MNIST~\cite{cohen2017emnist}, is a non-IID
    dataset with 3,400 clients, 62 classes, and a total of 671,585 grayscale images. We use LeNet-5 to experiment on FEMNIST and we randomly select $n$ out of 3,400 clients for FL training.  
\end{itemize}

\noindent {\bf Parameter setting:} For each dataset, we uniformly select 200 samples from the training set as the server's validation dataset $D_S$ and  
the remaining training data are randomly and evenly divided into $n$ subsets as each client's local training data, where $n$ is  the total number of clients. For the thresholds $\tau_c$ and $\tau_e$, we empirically set $\tau_c=0.99$ on the four datasets; set $\tau_e=0.93$ on MNIST, FMNIST and FEMNIST, and $\tau_e=30.00$ on CIFAR-10, considering the different number of pixels in these datasets. 
We set the clients' local training epochs as 5 and global iterations as 300. 
Our source code is available at the Github: \url{https://github.com/BPFL/BPFL}.

\begin{figure*}[!t]
    \centering
    \begin{subfigure}{0.19\linewidth}
        \includegraphics[width=1.0\textwidth]{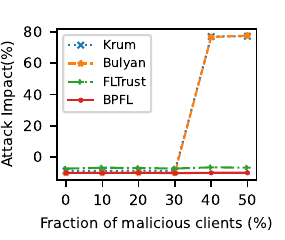}
        \caption{Add Noise}
        \label{fig:mali_num_a_m}
    \end{subfigure}
    \begin{subfigure}{0.19\linewidth}
        \includegraphics[width=1.0\textwidth]{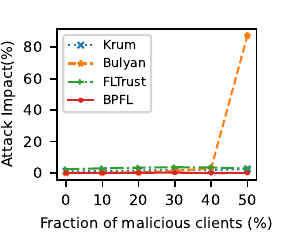}
        \caption{Sign Flip}
        \label{fig:mali_num_b_m}
    \end{subfigure}
    \begin{subfigure}{0.19\linewidth}
        \includegraphics[width=1.0\textwidth]{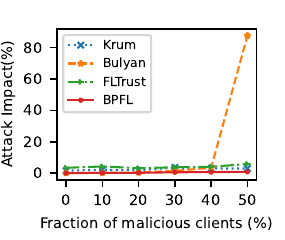}
        \caption{Min-Max}
        \label{fig:mali_num_c_m}
    \end{subfigure}
    \begin{subfigure}{0.19\linewidth}
        \includegraphics[width=1.0\textwidth]{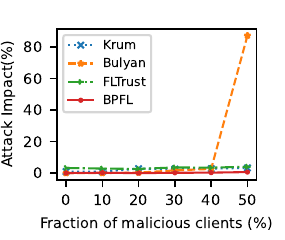}
        \caption{Min-Sum}
        \label{fig:mali_num_d_m}
    \end{subfigure}
    \begin{subfigure}{0.19\linewidth}
        \includegraphics[width=1.0\textwidth]{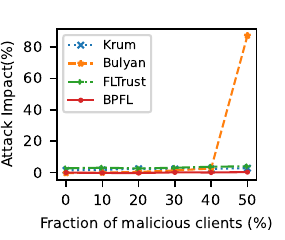}
        \caption{AGR Attack}
        \label{fig:mali_num_e_m}
    \end{subfigure}
    \caption{Attack impact on Byzantine-robust FL methods under different attacks on MNIST vs. malicious clients\%.}
    \label{fig:testing_error_with_malicious_mnist}
\end{figure*}

\begin{figure*}[!t]
    \centering
    \begin{subfigure}{0.19\linewidth}
        \includegraphics[width=1.0\textwidth]{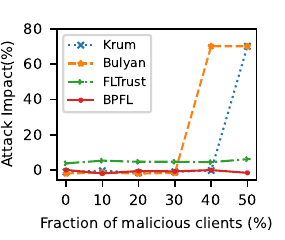}
        \caption{Add Noise}
        \label{fig:mali_num_a_fm}
    \end{subfigure}
    \begin{subfigure}{0.19\linewidth}
        \includegraphics[width=1.0\textwidth]{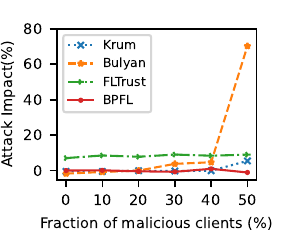}
        \caption{Sign Flip}
        \label{fig:mali_num_b_fm}
    \end{subfigure}
    \begin{subfigure}{0.19\linewidth}
        \includegraphics[width=1.0\textwidth]{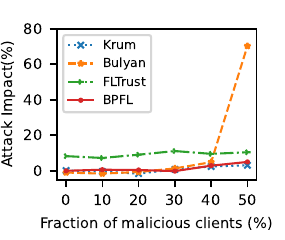}
        \caption{Min-Max}
        \label{fig:mali_num_c_fm}
    \end{subfigure}
    \begin{subfigure}{0.19\linewidth}
        \includegraphics[width=1.0\textwidth]{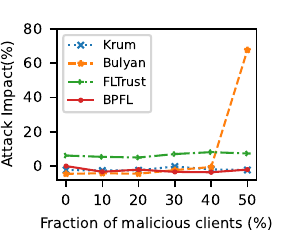}
        \caption{Min-Sum}
        \label{fig:mali_num_d_fm}
    \end{subfigure}
    \begin{subfigure}{0.195\linewidth}
        \includegraphics[width=1.0\textwidth]{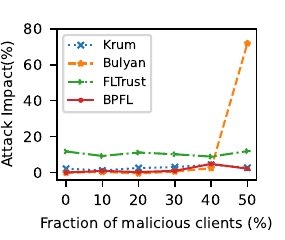}
        \caption{AGR Attack}
        \label{fig:mali_num_e_fm}
    \end{subfigure}
    \caption{Attack impact on Byzantine-robust FL methods under different attacks on FMNIST vs. malicious clients\%.}
    \label{fig:testing_error_with_malicious_fmnist}
\end{figure*}

\begin{figure*}[!t]
    \centering
    \begin{subfigure}{0.19\linewidth}
        \includegraphics[width=1.0\textwidth]{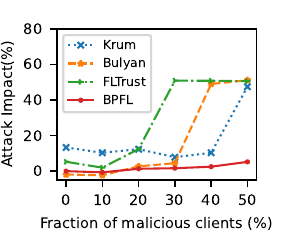}
        \caption{Add Noise}
        \label{fig:mali_num_a_c}
    \end{subfigure}
    \begin{subfigure}{0.19\linewidth}
        \includegraphics[width=1.0\textwidth]{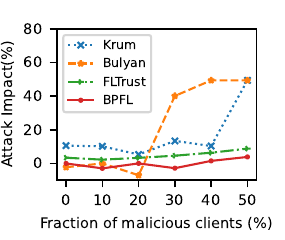}
        \caption{Sign Flip}
        \label{fig:mali_num_b_c}
    \end{subfigure}
    \begin{subfigure}{0.19\linewidth}
        \includegraphics[width=1.0\textwidth]{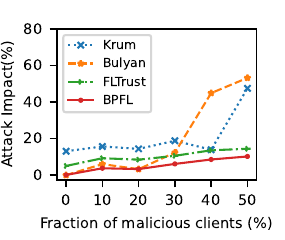}
        \caption{Min-Max}
        \label{fig:mali_num_c_c}
    \end{subfigure}
    \begin{subfigure}{0.19\linewidth}
        \includegraphics[width=1.0\textwidth]{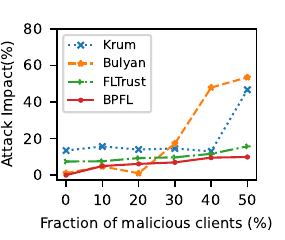}
        \caption{Min-Sum}
        \label{fig:mali_num_d_c}
    \end{subfigure}
    \begin{subfigure}{0.19\linewidth}
        \includegraphics[width=1.0\textwidth]{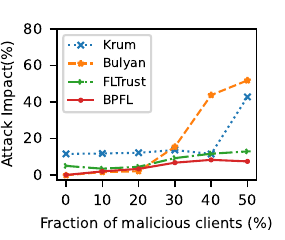}
        \caption{AGR Attack}
        \label{fig:mali_num_e_c}
    \end{subfigure}
    \caption{Attack impact on Byzantine-robust FL methods under different attacks on CIFAR10 vs. malicious clients\%.}
    \label{fig:testing_error_with_malicious_cifar}
\end{figure*}

\begin{figure*}[!t]
    \centering
    \begin{subfigure}{0.19\linewidth}
        \includegraphics[width=1.0\textwidth]{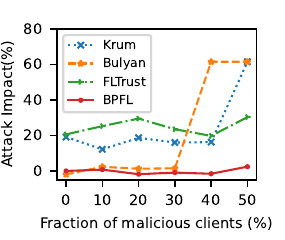}
        \caption{Add Noise}
        \label{fig:mali_num_a_fe}
    \end{subfigure}
    \begin{subfigure}{0.19\linewidth}
        \includegraphics[width=1.0\textwidth]{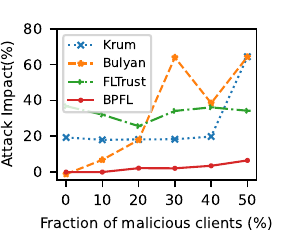}
        \caption{Sign Flip}
        \label{fig:mali_num_b_fe}
    \end{subfigure}
    \begin{subfigure}{0.19\linewidth}
        \includegraphics[width=1.0\textwidth]{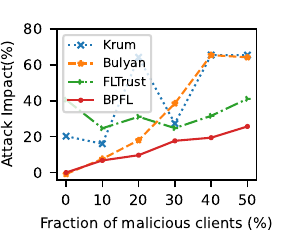}
        \caption{Min-Max}
        \label{fig:mali_num_c_fe}
    \end{subfigure}
    \begin{subfigure}{0.19\linewidth}
        \includegraphics[width=1.0\textwidth]{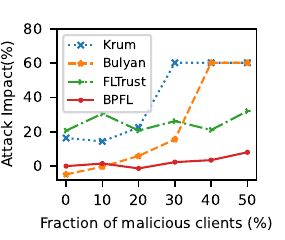}
        \caption{Min-Sum}
        \label{fig:mali_num_d_fe}
    \end{subfigure}
    \begin{subfigure}{0.19\linewidth}
        \includegraphics[width=1.0\textwidth]{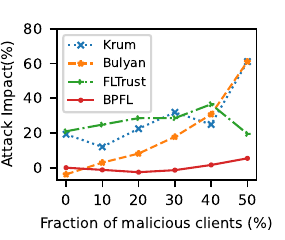}
        \caption{AGR Attack}
        \label{fig:mali_num_e_fe}
    \end{subfigure}
    \caption{Attack impact on Byzantine-robust FL methods under different attacks on FEMNIST vs. malicious clients\%.}
    \label{fig:testing_error_with_malicious_femnist}
\end{figure*}

\subsection{Experimental Results}

\noindent {\bf Overhead evaluation.} 
We first evaluate BPFL's overall overhead and the results are shown in Figure~\ref{fig:overhead_evaluation}. 
In Figure~\ref{fig:overhead_a} and \ref{fig:overhead_b}, we fix the data dimension $d=50K$. We observe the server's runtime and communication cost is  linear in the number of clients $n$ while those 
of clients are almost stable, i.e. independent of $n$. The result is the same as we expected because ZKP enables each client to independently generate a  proof locally without interacting with others and the server also independently verifies each client's proof locally. This ensures \name  will not induce a sharp overhead increase when $n$ 
increases. Figure~\ref{fig:overhead_c} and \ref{fig:overhead_d} show the overhead w.r.t. data dimension $d$, where 
$n=50$.  We see per clients' runtime and communication cost increases almost linearly with $d$ (recall the clients have an $O(d\log d)$ computation complexity and $O(d)$ communication complexity). 
For server, the runtime and communication cost are both linear to $d$. 
This is because both the server's  computation and  communication complexity are $O(d)$. 

Next, we compare BPFL's overhead with the \emph{de facto} MPC-based aggregation method \cite{bonawitz2017practical} and latest Byzantine-robust and privacy-preserving aggregation method CS~\cite{10250809}.  
As MPC is only for privacy-preserving, for fair comparison, we only consider the runtime for  privacy-preserving. 
{
Figure~\ref{fig:overhead_evaluation_compare_mpc} shows the results. 
First, BPFL has the lowest computation cost w.r.t. $n$.  
 This is because  MPC has an $O(n^2 +nd)$ computation complexity and CS has an $O(dk)$ computation complexity, where $k$ is the compressed dimensionality--A larger $k$ yields larger computation time and we set the smallest possible $k$ in our experiments. 
 In contrast, 
 our \name only needs one add operation.} 
Second, the runtime of MPC and SC both increase with the data dimension $d$, while BPFL is insensitive to it. 

Lastly, we evaluate the overhead of each phase of the ZKP for robustness. On the used LeNet-5 network, the setup time is 64.56s, the proof generation time is 28.2s, and the verification time is 0.24s. Indeed, the setup phase of the ZKP is relatively time-consuming, but it 
is executed only once before the iteration and is not affected by the number of clients, hence the run-time is acceptable.

\noindent {\bf Robustness evaluation.}
In this experiment, we evaluate \name in terms of  Byzantine-robustness. For comparison, we also choose three well known Byzantine-robust methods, i.e., Krum~\cite{blanchard2017machine}, Bulyan~\cite{guerraoui2018hidden}, and the state-of-the-art FLTrust~\cite{cao2020fltrust}. 
We consider five model poisoning attacks: (1) \textit{Add Noise Attack}~\cite{li2019rsa}: malicious clients add a noise to the local model; (2) \textit{Sign Flip Attack}~\cite{damaskinos2018asynchronous}: malicious clients flip the sign of their local model; 
(3) \textit{Min-Max Attack}; (4) \textit{Min-Sum Attack}; and (5) \textit{AGR attacks}: three state-of-the-art model poisoning attacks proposed in \cite{shejwalkar2021manipulating}. 
Following~\cite{shejwalkar2021manipulating}, we use the metric \textit{attack impact}, which is defined as the reduction in the accuracy of the global model due to the attack, to measure the impact of attacks. An attack having a large attack impact implies it is more effective.  

\emph{1) Impact of the number of clients:} Figure~\ref{fig:testing_error_with_client} shows the attack impact of the considered five attacks to BPFL, 
where the fraction of malicious clients is 
20\%. 
 We observe that 1) Min-Max, Min-Sum, and AGR attacks have larger attack impacts than Add Noise and Sign Flip attacks, showing they are more effective;  
2) \name achieves a small attack impact 
on the three IID datasets 
under different attacks when facing different number of clients, i.e., less than 6\% (most less than 2\%) in  almost all cases. The attack impact on the non-IID FEMNIST is slightly larger than that on the IID datasets. One reason is that the local models across different clients can be more diverse when trained on non-IID data, which thus makes it more challenging to use a threshold to differentiate between honest models 
and malicious models.

 \emph{2) Impact of the fraction of malicious clients:} 
Figure~\ref{fig:testing_error_with_malicious_mnist} to Figure~\ref{fig:testing_error_with_malicious_femnist} shows the attack impact of the fraction of malicious clients on  Byzantine-robust FL methods against the five attacks, where we fix $n=50$. 
We have the following observations: 1) BPFL performs the best to defend against the five attacks and the attack impact is less than  {$6\%$} in all IID datasets and  less than $10\%$ in the non-IID FEMNIST. This is because BPFL leverages both the Euclidean distance and cosine similarity as the metrics to identify malicious clients, while the remaining robust methods only use one of the two metrics. 2) The state-of-the-art FLTrust performs next to our BPFL. One reason is that both of them train a reference model in the server using a clean validation dataset and use this reference model to guide the correct model update. 
3) Krum and Bulyan fail to defend against some attacks (e.g., Add Noise atack and Sign Flip attack) when the fraction of malicious clients is larger than a certain threshold (e.g., 40\% on MNIST). 
Note that \cite{cao2020fltrust} also draws this conclusion. 
The above observations verify that, it is important to consider two similarity metrics and train a reference model to defend against Byzantine attacks.

 \emph{3) Impact of hyperparametes $\tau_c$ and $\tau_e$:} 
{%
We also test different $\tau_c$ and $\tau_e$,  
e.g.,  $\tau_c=0.95$, $\tau_e=0.9$ on MNIST, FMNIST, and FEMNIST, and $\tau_e=35$ on CIFAR-10. We found the results are almost the same. 
}

\begin{table*}[!t]
\caption{\name and plaintext against model inversion attacks. {$\downarrow$ ($\uparrow$): a smaller (larger) value indicates a larger privacy leakage.}}
\setlength{\tabcolsep}{3mm}
\label{tab:2}
\centering
\addtolength{\tabcolsep}{4pt}
\begin{tabular}{c|c|c|c|c|c|c|c} 
\hline
\multicolumn{2}{l|}{\multirow{2}{*}{}} & \multicolumn{2}{c|}{MSE~$\downarrow$}                                    & \multicolumn{2}{c|}{PSNR~ $\uparrow$} & \multicolumn{2}{c}{SSIM~$\uparrow$}  \\ 
\cline{3-8}
\multicolumn{2}{l|}{}                  & plaintext                                            & BPFL & plaintext & BPFL          & plaintext & BPFL          \\ 
\hline
\multirow{3}{*}{MNIST}   & DLG         & $2.9\times10\textsuperscript{-7}$  & 1.10 & 73.38     & -0.436        & 0.999     & 0             \\ 
\cline{2-8}
                         & iDLG        & $3.8\times10\textsuperscript{-7}$  & 1.11 & 69.201   & -0.469        & 0.989     & 0             \\ 
\cline{2-8}
                         & R-GAP       & $1\times10\textsuperscript{-8}$    & 0.01 & 251.843   & 10.00          & 0.999     & 0.102        \\ 
\hline
\multirow{3}{*}{FMNIST}  & DLG         & $1.9\times10\textsuperscript{-5}$  & 1.22 & 72.668   & -0.839        & 0.999     & 0.001         \\ 
\cline{2-8}
                         & iDLG        & $5.3\times10\textsuperscript{-6}$ & 1.21 & 70.413  & -0.829        & 0.998     & 0.002         \\ 
\cline{2-8}
                         & R-GAP       & $1.2\times10\textsuperscript{-6}$  &0.09& 245.656   & 7.623         & 0.999     & 0.069         \\ 
\hline
\multirow{3}{*}{CIFAR10} & DLG         & $7.9\times10\textsuperscript{-4}$                                                & 1.26 & 50.94    & -0.988        & 0.987     & 0.008         \\ 
\cline{2-8}
                         & iDLG        & $5.8\times10\textsuperscript{-5}$ & 1.26 & 47.297    & -1.01         & 0.998     & 0.008         \\ 
\cline{2-8}
                         & R-GAP       & $4.4\times10\textsuperscript{-5}$  & 0.34 & 31.25    & 4.181         & 0.97      & 0.172         \\
\hline
\multirow{3}{*}{{FEMNIST}} & DLG         & $7\times10\textsuperscript{-8}$                                                & 1.94 & 73.24    & -2.877        & 0.999     & 0.013         \\ 
\cline{2-8}
                         & iDLG        & $1.3\times10\textsuperscript{-7}$ & 1.96 & 71.337    & -2.92         & 0.999     & 0.012         \\ 
\cline{2-8}
                         & R-GAP       & $4.2\times10\textsuperscript{-4}$  & 0.93 & 239.605    & 0.172         & 0.999      & 0       \\
\hline
\end{tabular}
\end{table*}

\noindent {\bf Privacy-preserving evaluation.}
In this experiment, we evaluate three well-known model inversion attacks, i.e.,  DLG~\cite{zhu2019deep}, iDLG \cite{zhao2020idlg}, and R-GAP~\cite{zhu2020r}, against conventional plaintext local models and BPFL's local models on the four datasets.   
To measure the attack effectiveness, we randomly select 100 images in each dataset and compute the average value of Mean Square Error (MSE), Peak Signal to Noise Ratio (PSNR), and Structural Similarity Index Measure (SSIM) between the recovered images by the four attacks and the real images.  
Table \ref{tab:2} shows the results. 
We observe that all attacks achieve good attack performance on plaintext (i.e., low MSE, large PSNR, and large SSIM), but perform poorly on BPFL. We also randomly show in Figure~\ref{fig:rec_BPFL} the recovered images under the DLG attack. 
We observe that for plaintext models, the real images from different datasets can be recovered within 100 iterations. In contrast, with BPFL, the attack cannot recover any useful information of the true images even with 1,000 iterations. This is because BPFL can theoretically protect the client models from being inferred. 

{\bf Detailed image recovery process} The detailed recovery process and recovered images by DLG, iDLG, and R-GAP attacks are shown in 
Figure~\ref{fig:add_pp_evaluation_DLG}, Figure~\ref{fig:add_pp_evaluation_iDLG}, and Figure~\ref{fig:add_pp_evaluation_RGAP}, respectively. 
We can observe that 
the real images from the datasets can be successfully recovered based on plaintext within 100 iterations. While with BPFL, the
attack cannot recover any useful information of the true images, as 
BPFL can theoretically protect the client models from being inferred.
\begin{figure}[htbp]
    \centering
    \includegraphics[width=7cm]{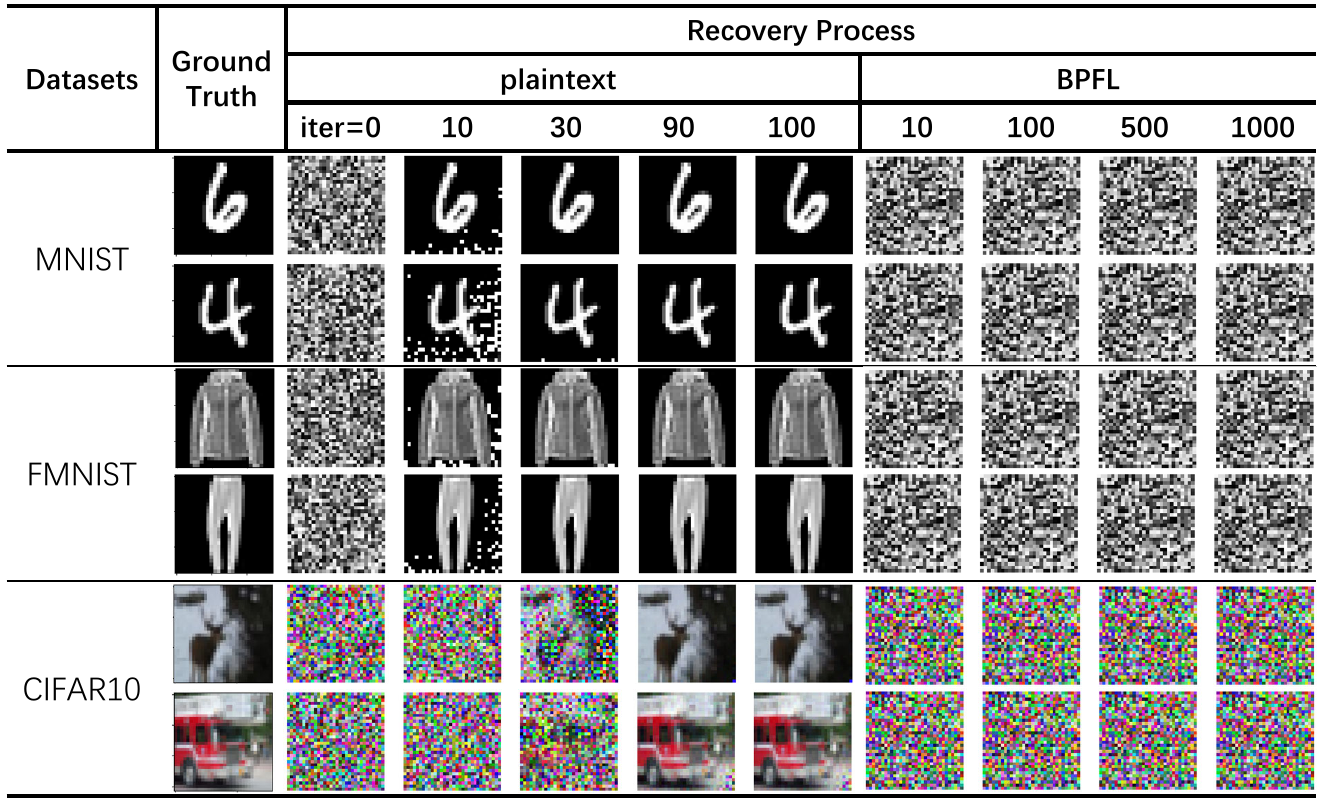}
    \caption{More recovered images by the DLG attack against the plaintext and the privacy-preserving BPFL.}
    \label{fig:add_pp_evaluation_DLG}
\end{figure}

\begin{figure}[htbp]
    \centering
    \includegraphics[width=7cm]{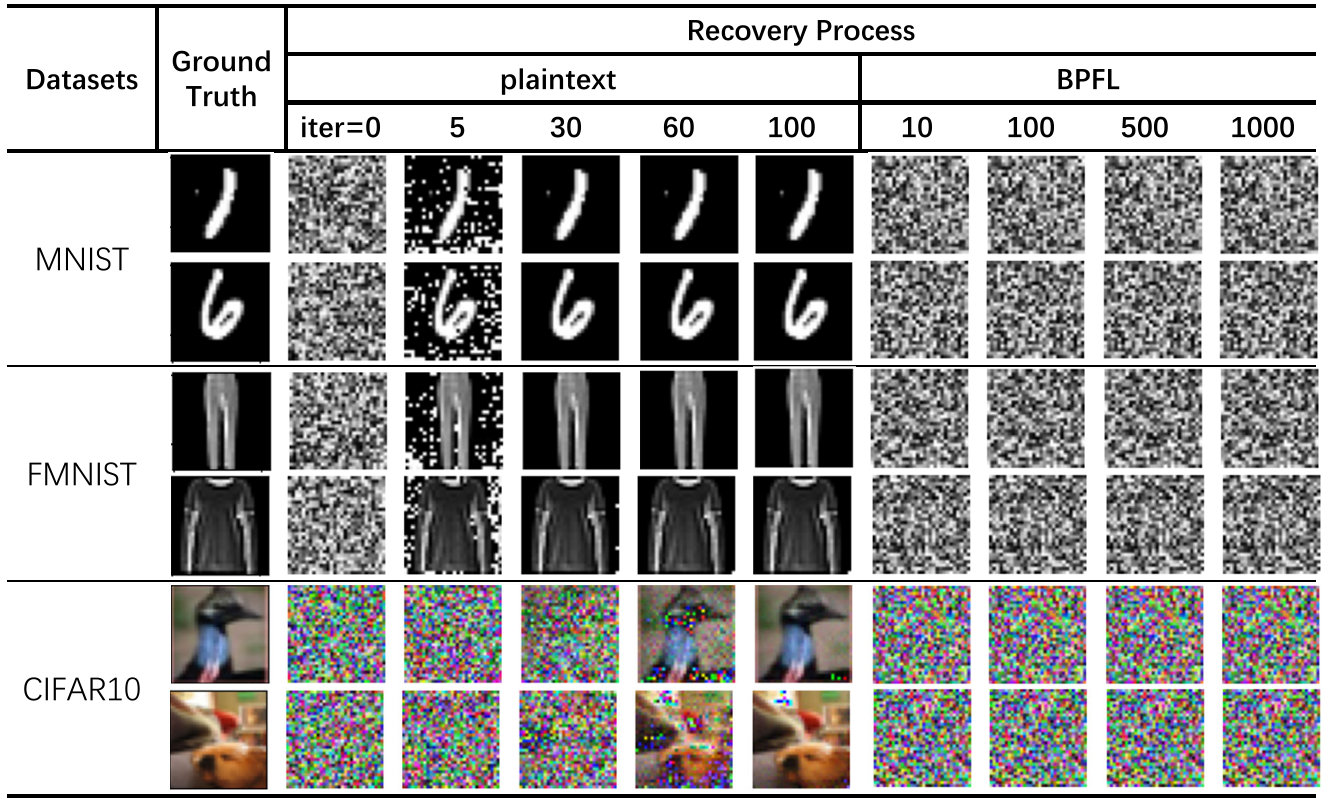}
    \caption{Recovered images by the iDLG attack against the plaintext and the privacy-preserving BPFL.}
    \label{fig:add_pp_evaluation_iDLG}
\end{figure}

\begin{figure}[htbp]
    \centering
    \includegraphics[width=7cm]{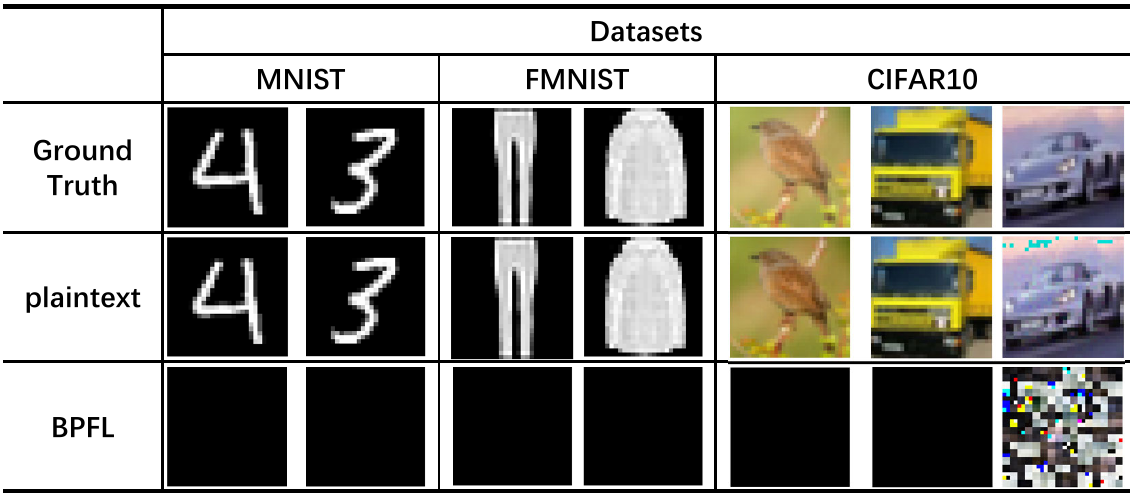}
    \caption{Recovered images by the R-GAP attack against the plaintext and the privacy-preserving BPFL.}
    \label{fig:add_pp_evaluation_RGAP}
    \vspace{-8mm}
\end{figure}

\begin{figure}[htp]
\centering
\begin{minipage}{0.48\textwidth}
\centering
\includegraphics[width=6cm]{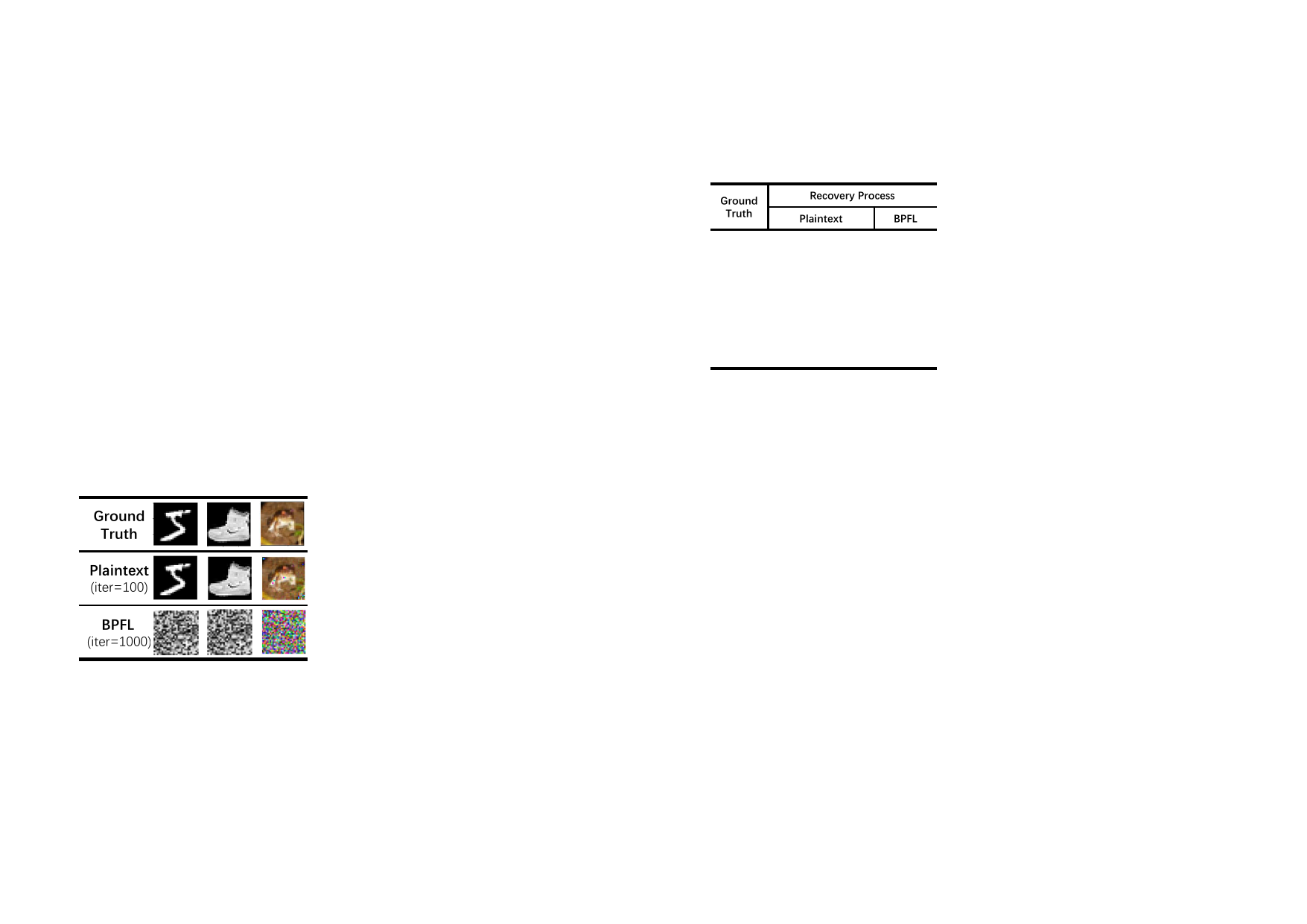}
\caption{Recovered images by the DLG attack to 
    the plaintext and BPFL. 
    }
    \label{fig:rec_BPFL}
\end{minipage}

\begin{minipage}{0.48\textwidth}
\centering
\setlength{\tabcolsep}{6pt}
\addtolength{\tabcolsep}{4pt}
\captionof{table}{Comparing BPFL and representative existing works. B.R.: Byzantine-Robust; P.P.: Privacy-Preserving; 
Eff.: Efficient}
\label{tab:1}
\begin{tabular}{cccc}
\hline
\textbf{Method} & \textbf{B.R.} & \textbf{P.P.} & \textbf{Eff.} \\ \hline
Krum~\cite{blanchard2017machine}&\Checkmark&\XSolidBrush&\Checkmark\\ \hline
\cite{hao2021efficient}&\Checkmark&\Checkmark& \XSolidBrush\\ \hline
\cite{bonawitz2017practical}& \XSolidBrush& \Checkmark & \XSolidBrush \\ \hline
\cite{10093038}  &\Checkmark & \Checkmark & \XSolidBrush  \\ \hline
\cite{wei2020federated}&\XSolidBrush& \Checkmark&\Checkmark\\ \hline
\cite{10250809} & \Checkmark   & \Checkmark  & \XSolidBrush  \\ \hline
FLTrust~\cite{cao2020fltrust}&\Checkmark &\XSolidBrush &\Checkmark\\ \hline
\name   &  \Checkmark   &  \Checkmark   &   \Checkmark \\ \hline
\end{tabular}
\end{minipage}
\end{figure}

\section{Related Work}\label{sec:related}

\noindent {\bf Byzantine-Robust FL.}
A series of Byzantine-robust FL has been proposed recently~\cite{blanchard2017machine,chen2017distributed,guerraoui2018hidden,yin2018byzantine,guerraoui2018hidden,chen2018draco,pillutla2019robust,xie2019zeno,wu2020federated,cao2020fltrust,karimireddy2021learning,farhadkhani2022byzantine}. They majorly leverage the similarity between local client models and the global model to perform robust aggregation.  
\cite{blanchard2017machine} proposed Krum, the first solution to defend against Byzantine attacks. The idea of Krum is to treat a local model as benign if this local model is similar to other local client models, 
where the similarity is measured by Euclidean distance. Krum is shown to  tolerate $f$ malicious clients out of the total $n$ clients, where $n$ and $f$ satisfy $2f+2<n$.  
\cite{yin2018byzantine} proposed a median-based aggregation, where the server obtains the $j$-th parameter of its global model by calculating the median of the $j$-th parameter of all the $n$ local client models. \cite{guerraoui2018hidden} proposed Bulyan, which  combines the idea of Krum and median. Bulyan first iteratively uses Krum to select $k(\le n-2f)$ client model. Then, Bulyan aggregates the $k$ client models using a variant of the trimmed mean. These methods can defend against malicious clients to some extent, but are still not effective enough.  
To further enhance the performance, 
\cite{cao2020fltrust} proposed FLTrust. 
In FLTrust, the server  holds a clean validation dataset, which is used to train a reference model that guides the correct model update direction. 
Specifically, the server assigns a trust score to each client model update based on the cosine similarity between the client model and the reference model. The trust score is 
used as the weight of the client's local update to participate in the model aggregation.
Only client models with relatively larger weights are allowed to participate in the global model aggregation.  
FLTrust is shown to obtain the state-of-the-art robustness.

\noindent {\bf {Privacy-Preserving FL.}}
Existing privacy-preserving FL methods can be mainly divided into three categories: differential privacy (DP)~\cite{pathak2010multiparty,shokri2015privacy,hamm2016learning,mcmahan2018learning,geyer2017differentially,wei2020federated}, secure multi-party computation (MPC)~\cite{danner2015fully,mohassel2017secureml,bonawitz2017practical,melis2019exploiting}, and homomorphic encryption (HE)~\cite{aono2017privacy,zhang2020batchcrypt}. For example, \cite{wei2020federated} proposes a novel DP FL framework, in which carefully designed noises are added to the local client models before aggregation. However, the current
DP-based methods have high utility losses. 
MPC-based methods ensure local clients and the server to jointly complete the aggregation without disclosing the clients' private data. However, they incur an intolerable computation and communication overhead. 
{For instance, the secure aggregation based on MPC in \cite{bonawitz2017practical} has a computation complexity $O(n^2+nd+nd^2)$, which is quadatic in the number of clients $n$ and parameters $d$, while our BPFL is linear to both $n$ and $d$. } 
HE-based methods encrypt local client models before submitting them to the server. Due to the property of HE, the server can complete the global model aggregation without the need to perform decryption. 
HE-based methods obtain  state-of-the-art efficiency and BPFL leverages HE to protect client's data.

\noindent {\bf Byzantine-Robust and Privacy-Preserving FL:} Several works~\cite{hao2021efficient,MA2022103561,9757841,9849010,10093038,10250809} address both privacy and robustness issues. 
However, they are either computation or/and communication inefficient, or have large accuracy loss.  
For instance,  
\cite{10093038} uses three-party computation which has a high computation cost. 
\cite{9757841} are based on DP and robust aggregation. As noted, DP inevitably results in a large accuracy loss.   
\cite{10250809} uses robust aggregation similar to FLTrust and protects privacy via MPC.  

Our \name  simultaneously achieves the both goals 
of byzantine-robust and privacy-preserving 
without incurring excessive overhead to the server. 
Table~\ref{tab:1} shows the comparisons between BPFL and representative existing works.

\section{Conclusion}
\label{sec:conclusion}
We study defenses against Byzantine (security) attacks and data reconstruction  
(privacy) attacks to FL.  
To this end, we propose BPFL, the first FL 
that is efficient, Byzantine-robust, and provably privacy-preserving. 
BPFL seamlessly integrates the ideas of existing Byzantine-robust FL methods, zero-knowledge proof, and homomorphic encryption into a framework.  
Experimental results validate our claim.

{
    \small
    \bibliographystyle{IEEEtran}
    \bibliography{reference,ref_wang}
}

subsection*{  }
\setlength\intextsep{0pt} 
\begin{wrapfigure}{l}{25mm}
    \centering
    \includegraphics[width=1in,height=1.25in,clip,keepaspectratio]{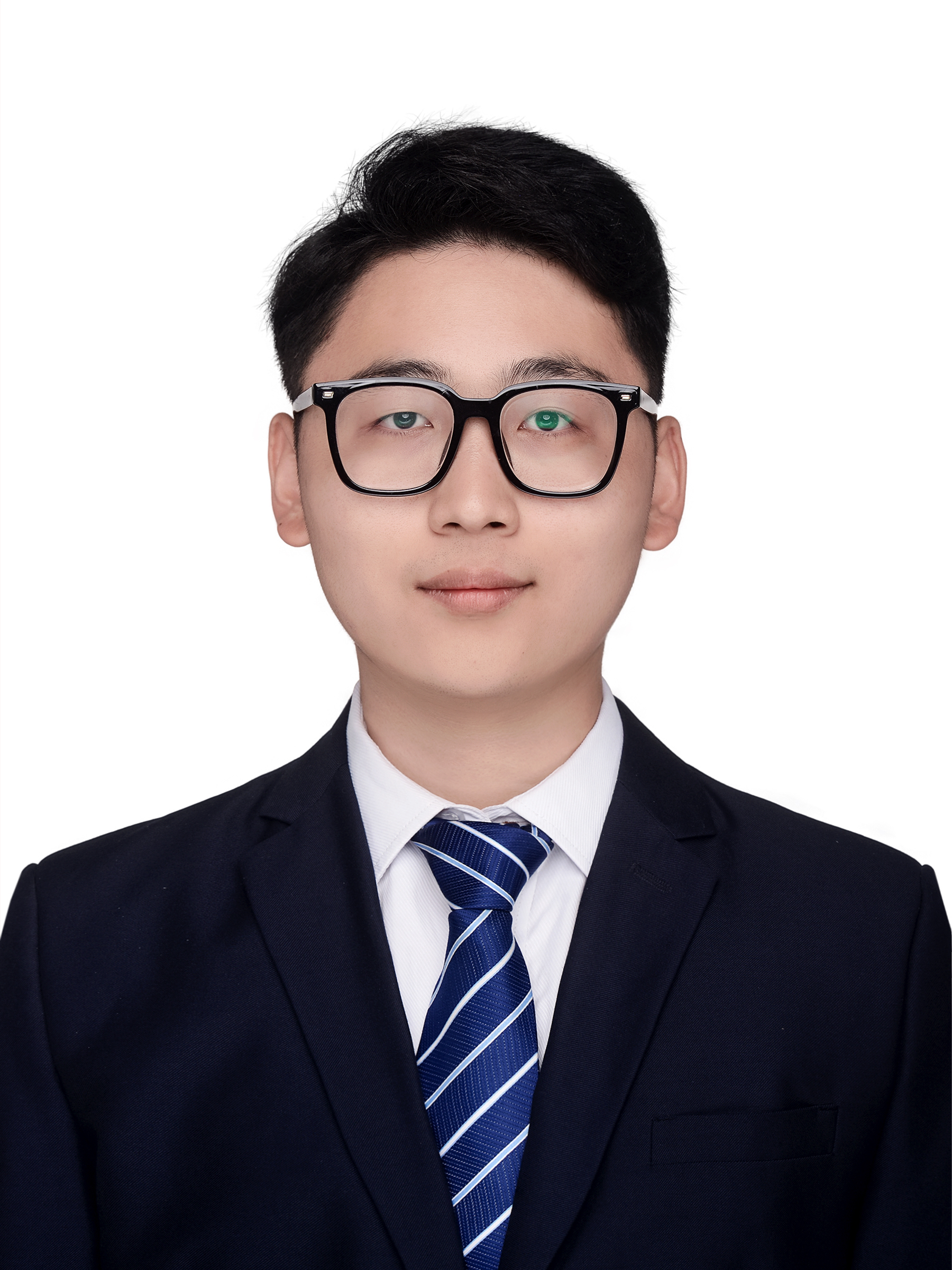}
\end{wrapfigure}
\noindent \textbf{Chenfei Nie} is a graduate student in Computer Science at Jilin University, China. He received his BSc degree also from Jilin University in 2021. His main research interests are gird computing and network security.\par

\hspace*{\fill} 

\subsection*{  } %
\setlength\intextsep{0pt} 
\begin{wrapfigure}{l}{25mm}
    \centering
    \includegraphics[width=1in,height=1.25in,clip,keepaspectratio]{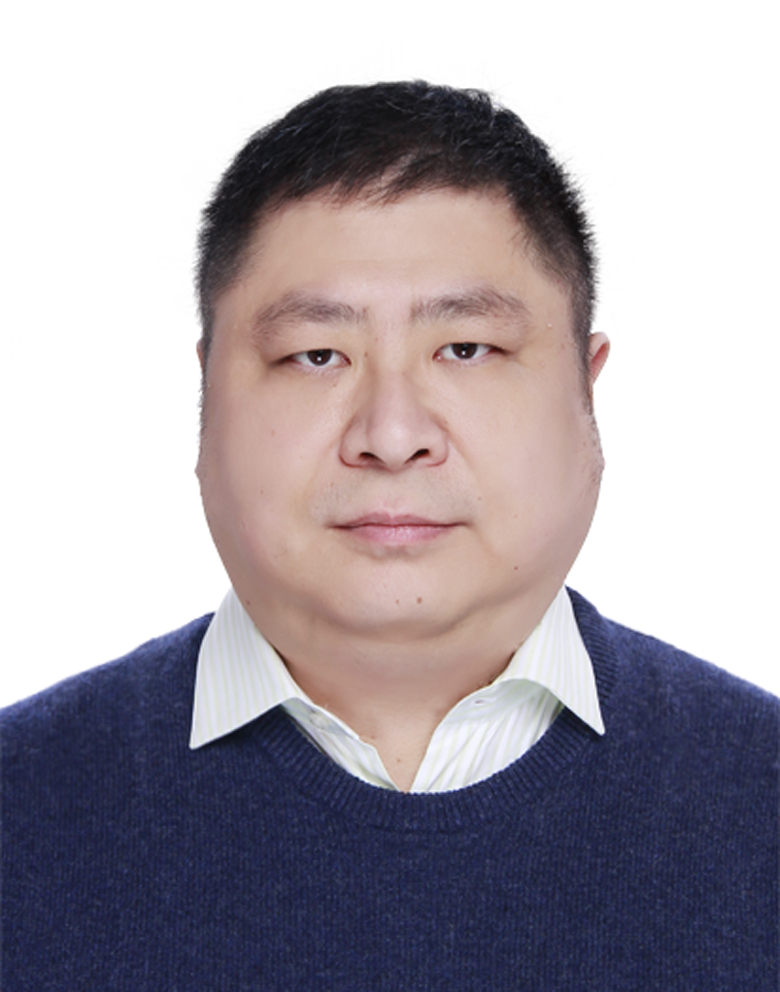}
\end{wrapfigure}
\noindent \textbf{Qiang Li} is currently a professor in Computer Science at Jilin University, China. He received his BSc, MSc and PhD degrees also from Jilin University in 1998, 2001, and 2005, respectively. His main research interests are in artificial intelligence security and privacy preserving.\par

\hspace*{\fill}

\subsection*{  } %
\setlength\intextsep{0pt} 
\begin{wrapfigure}{l}{25mm}
    \centering
    \includegraphics[width=1in,height=1.25in,clip,keepaspectratio]{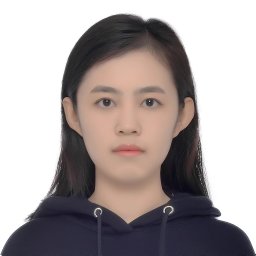}
\end{wrapfigure}
\noindent \textbf{Yuxin Yang} is a Ph.D. student in the College of Computer Science and Technology, Jilin University, China, and Department of Computer Science, Illinois Institute of
Technology, USA. Her main research interests are gird computing and network security.\par

\subsection*{  } %
\setlength\intextsep{0pt} 
\begin{wrapfigure}{l}{25mm}
    \centering
    \includegraphics[width=1in,height=1.25in,clip,keepaspectratio]{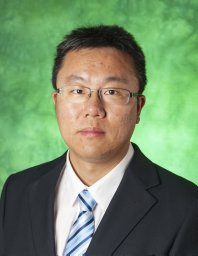}
\end{wrapfigure}
\noindent \textbf{Yuede Ji} received his BEng degree in software engineering from Jilin
University, China in 2012. Currently, he is an MSc candidate in the Computer Science Department at Jilin University, China. His main research focuses on the detection of botnet.\par

\subsection*{  } %
\setlength\intextsep{0pt} 
\begin{wrapfigure}{l}{25mm}
    \centering
    \includegraphics[width=1in,height=1.25in,clip,keepaspectratio]{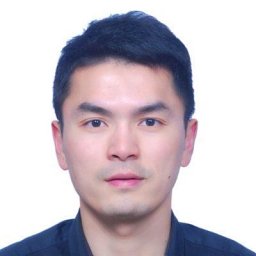}
\end{wrapfigure}
\noindent \textbf{Binghui Wang} received the B.Sc.
degree in network engineering and the M.Sc. degree
in software engineering from the Dalian University of Technology, Dalian, China, in 2012 and
2015, respectively, and the Ph.D. degree in electrical and computer engineering from Iowa State
University, Ames, Iowa, in 2019. His research interests include data-driven
security and privacy, trustworthy machine learning,
and machine learning.\par

\end{document}